\def\BibTeX{{\rm B\kern-.05em{\sc i\kern-.025em b}\kern-.08em
    T\kern-.1667em\lower.7ex\hbox{E}\kern-.125emX}}
\newif\ifnotes
\newcommand{\ignore}[1]{}
\newlength\myindent
\DeclareMathAlphabet{\mathcal}{OMS}{cmsy}{m}{n}
\setlist[itemize]{noitemsep,leftmargin=2mm}
\setlist[itemize,1]{leftmargin=-1mm}
\newif\iftr
\newcommand{\msf}[1]{\ensuremath{{\mathsf {#1}}}}
\newcommand{\mtt}[1]{\ensuremath{\mathtt {#1}}}
\newcommand{\hash}{\ensuremath{{\mathcal H}}}
\newcommand{\adv}{\ensuremath{{\mathcal A}}}
\newcommand{\A}{\adv}
\newcommand{\samples}{\overset{\$}{\leftarrow}}
\newcommand{\functionality}[1]{\ensuremath{{\cal F}_{\textnormal{\msf{{#1}}}}}}
\newcommand{\F}{\functionality}
\newcommand{\Z}{\ensuremath{{\mathcal Z}}}
\renewcommand{\S}{\ensuremath{\mathcal S}}
\renewcommand{\L}{\mtt{L}}
\newcommand{\R}{\mtt{R}}
\newcommand{\pe}{{~\mathrel{+}\mathrel{\mkern-2mu}=~}}
\newcommand{\me}{{~\mathrel{-}\mathrel{\mkern-2mu}=~}}
\newcommand{\Fchain}{\F{Linked}}
\newcommand{\Pichain}{{\Pi_\msf{Linked}}}
\def\hsquad{\,\,\,\,\,}
\newtheorem{theorem}{Theorem}
\newtheorem{lemma}[theorem]{Lemma}  
\newtheorem{proposition}[theorem]{Proposition}
\newcommand{\ee}[1]{\ensuremath{#1}}
\def\deltaoffchain{\ee{\Delta}}
\def\deltareceive{\ee{\Delta}}
\def\vri{\ee{v_{r,i}}}
\def\vrip{\ee{v_{r,i}'}}
\def\vrj{\ee{v_{r,j}}}
\def\vrjp{\ee{v_{r,j}'}}
\def\sri{\ee{\sigma_{r,i}}}
\def\srj{\ee{\sigma_{r,j}}}
\def\auxin{\ee{\msf{aux}_{in}}}
\def\auxout{\ee{\msf{aux}_{out}}}
\def\inr{\ee{\msf{in}_r}}
\def\inrp{\ee{\msf{in}_r'}}
\def\sim{\ee{\mathcal{S}}}
\begin{document}

\title{Sprites and State Channels: Payment Networks that Go Faster than Lightning}

\date{}

\author{
{\rm Andrew Miller}\\
       {\small{UIUC}}
\and
{\rm Iddo Bentov}\\
{\small{Cornell University}}
\and
{\rm Ranjit Kumaresan}\\
{\small{Microsoft Research}}
\and
{\rm Christopher Cordi}\\
{\small{Sandia National Laboratories}}
\and
{\rm Patrick McCorry}\\
{\small{Newcastle University}}
}

\maketitle

\begin{abstract}
Bitcoin, Ethereum and other blockchain-based cryptocurrencies, as deployed today, cannot scale for wide-spread use.
A leading approach for cryptocurrency scaling is a smart contract mechanism called a payment channel which enables two mutually distrustful parties to transact efficiently (and only requires a single transaction in the blockchain to set-up). 
Payment channels can be linked together to form a payment network, such that payments between any two parties can (usually) be routed through the network along a path that connects them.
Crucially, both parties can transact without trusting hops along the route.

In this paper, we propose a novel variant of payment channels, called Sprites, that reduces the worst-case ``collateral cost'' that each hop along the route may incur.
The benefits of Sprites are two-fold.
1) In Lightning Network, a payment across a path of $\ell$ channels requires locking up collateral for $\Theta(\ell \Delta)$ time, where $\Delta$ is the time to commit an on-chain transaction. Sprites reduces this cost to $\Theta(\ell + \Delta)$. 2) Unlike prior work, Sprites supports partial withdrawals and deposits, during which the channel can continue to operate without interruption.

In evaluating Sprites we make several additional contributions.
First, our simulation-based security model is the first formalism to model timing guarantees in payment channels.
Our construction is also modular, making use of a generic abstraction from folklore, called the ``state channel,'' which we are the first to formalize.
We also provide a simulation framework for payment network protocols, which we use to confirm that the Sprites construction mitigates against throughput-reducing attacks.
\end{abstract}

\begin{IEEEkeywords}
bitcoin, ethereum, blockchains
\end{IEEEkeywords}

\section{Introduction}

Blockchain-based 
cryptocurrencies such as Bitcoin, Ethereum, and others, partially derive their security from their wide replication, which unfortunately comes at the expense of limited scalability.
A leading proposal for improved scaling of cryptocurrencies is to form a network of ``off-chain'' rapid payment channels, which act like credit lines secured by ``on-chain'' currency.
In this vision for the future, today's cryptocurrencies will serve primarily as a settlement layer, such that interaction with the blockchain will not be needed for most payments.

A chief concern for the feasibility of payment channel networks is if enough collateral will be available for payments to be routed at high throughput.
For every pending payment, some money in the channel must be reserved and held aside as collateral for the duration until the payment is completed, called the ``locktime''. Even though off-chain payments complete quickly in the typical case, if parties fail (or act to maliciously impose a delay), the collateral can be locked up for longer, until a dispute handler can be activated on-chain. 
Channels can also become unbalanced or depleted if too many payments are made in one direction, requiring on-chain transactions to rebalance. If a channel is depleted, it cannot be used in a payment path. A payment fails if no paths with sufficient capacity are found.

We characterize the performance of a payment channel protocol as its ``collateral cost'', which we think of as the lost opportunity value of money held in reserve (i.e., in unit of money $\times$ time) during the locktime.
To provide wide connectivity without requiring too much collateral, payment channel networks rely on linked payments that span a path of multiple channels.
The longer the payment path, the more collateral must be reserved: for a payment of size $\$X$ across a path of $\ell$ channels, a total of $O(\ell\$X)$ money must be reserved.
The worst-case delay until the collateral is released depends on a timeout parameter.
Due to limitations in current
state-of-the-art payment channels, namely
Lightning~\cite{lightning} and Raiden~\cite{raiden}, each link in
a payment path adds an delay to the timeout parameter. This
additional delay depends on the worst-case
confirmation time for an on-chain broadcast,
which we denote by $\Delta$ (i.e., an on-chain
transaction may take $\Delta$ times longer than an
ordinary off-chain message from one party to another). Thus the worst-case
delay for a payment in Lightning is $\Theta(\ell\Delta)$,
and so the total collateral cost
(money $\times$ time) of a
$\$X$ payment over a path of length $\ell$ is
$\Theta(\ell^2 \$X\Delta)$.

\begin{figure}[t]
\includegraphics[width=\columnwidth]{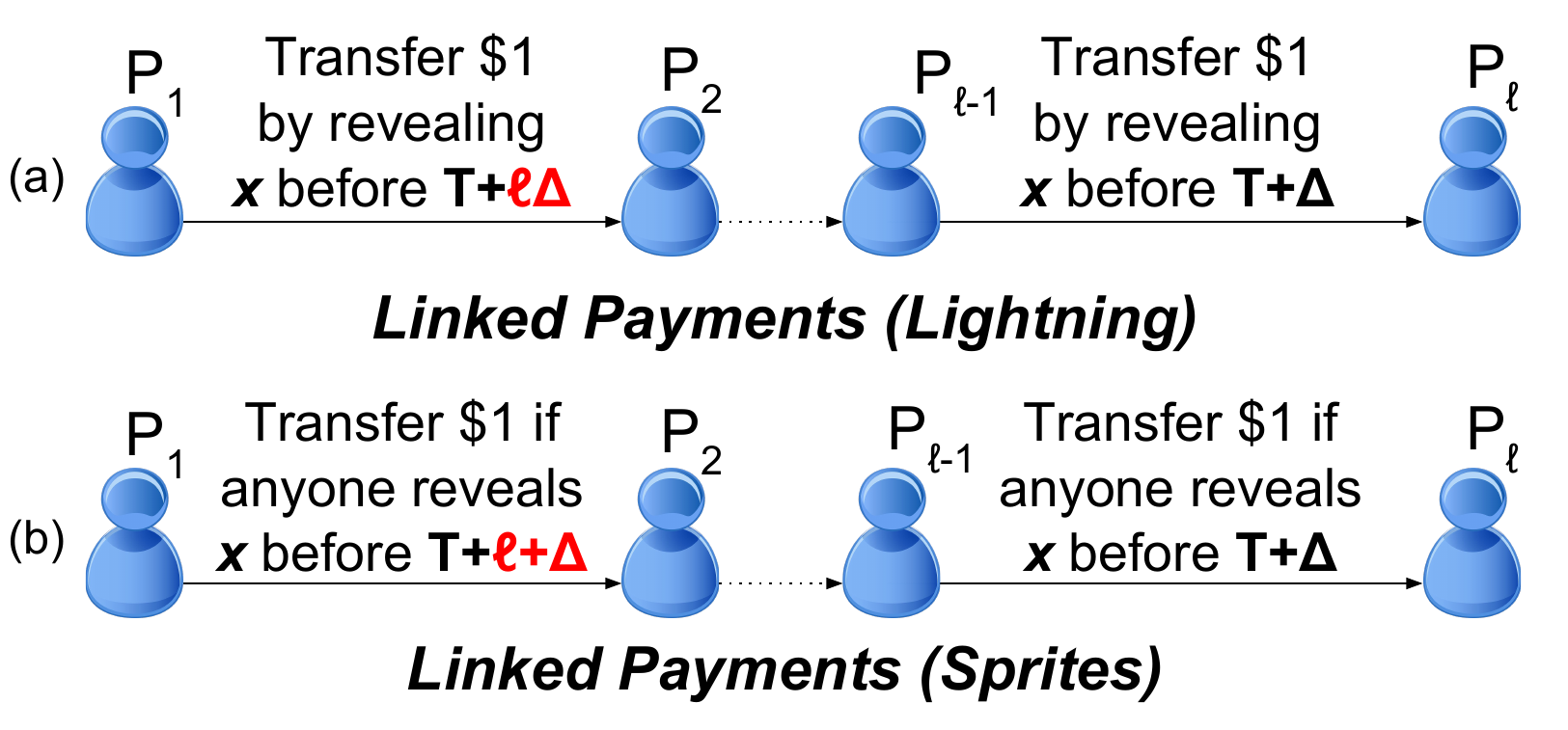}
\caption{ Cryptocurrencies like Bitcoin and Ethereum can serve as collateral for a scalable payment network (i.e. a credit network without counterparty risk)~\cite{lightning,duplexchannels}.
  Payment channels let one party rapidly pay
  another using available collateral, requiring a
  blockchain transaction only in case of
  dispute. Payments across multiple channels (a)
  can be linked using a common condition (such as
  revealing the preimage $x$ of a hash $h$). We
  contribute a novel payment channel (b) which
  improves the worst case delay for $\ell$-hop payments from $\Theta(\ell\Delta)$ to $\Theta(\ell+\Delta)$.}
\label{fig:highlight}
\end{figure}

Sprites also features an additional improvement, the ability to make incremental deposits to and withdrawals from a channel without interrupting it use for payments. In contrast, previous payment channel constructions must be closed and reopened (pausing for any pending payments to complete) in order to support deposits or withdrawals which inevitably harms throughput.

The design of Sprites is highly modular. A key contribution of our work is the formal development of a useful general primitive from folklore called a ``state channel,'' which is of independent interest. A state channel allows two or more parties to maintain an off-chain replicated state machine that can be synchronized on demand (or in case of a dispute) with the blockchain. We demonstrate the use of state channels through our Sprites construction; the abstraction neatly encapsulates the cryptography behind payment channels, such that our high level constructions need not mention digital signatures at all. We formally define and prove the security properties of our constructions using the simulation-based UC framework.



Another important question is what resulting topology will emerge from cryptocurrency payment channel networks. In the decentralized ideal, users would establish channels with peers in their social network. However, worryingly, high collateral costs associated with long payment paths may create an economic pressure towards a more centralized structure, with most individuals forming channels with only a small number of well connected bank-like hubs.
We evaluate the impact of our contribution with a simulation experiment, which features multiple topologies, and incorporates recent designs (off-chain rebalancing~\cite{revive}, and decentralized route-finding~\cite{flare}).
Not only does Sprites improve throughput and lower collateral costs, but the effect is most significant for decentralized configurations. Our work therefore directly supports realizing the vision of the decentralized  payment channel networks.

The constant locktime feature in Sprites requires the flexibility of Ethereum-style smart contracts  --- in particular, the ability for a transaction to depend on a ``global'' event recorded in the blockchain --- and therefore cannot (we conjecture) be implemented in Bitcoin. This finding therefore adds to our understanding of the separation between Bitcoin and Ethereum smart contracts, which we hope informs the design of future systems.





\ignore{
In summary, we make the following contributions:
\begin{itemize}
\item We provide the first formal definitions and secure constructions of cryptocurrency payment channels and linked payment. Our definitional framework is of independent interest, flexible composition of ``smart contracts'' (which represent code and state in a global database) and ``ideal functionalities'' which are models private to the players.
Our constructions are modular, making use of a generic ``state channel'' primitive which is of independent interest.
\item We provide a new construction for linked payment channels, which reduces the worst-case collateral cost (time $\times$ money) from $O(\ell \Delta)$ to $O(\ell + \Delta)$ for a payment chain of length $\ell$ and $\Delta$ indicates the possible delay to confirm an ``on-chain'' transaction.
\end{itemize}
}


\section{Background and preliminaries}
\subsection{Bitcoin and Blockchains}\label{sec:bgbitcoin}
Bitcoin is the first and (currently) most successful cryptocurrency, i.e. an open peer-to-peer network that implements a digital currency. The Bitcoin currency itself is not backed by anything else, but consists only of balances stored within the records of a shared blockchain database.
The average cost of a transaction in Bitcoin is currently over $\$1$ USD. It takes 10 minutes on average for a Bitcoin block to be found, ``confirming'' the transaction, but since the mining process is random, this can often take much longer.
Users are typically advised to wait for multiple (e.g., 6) confirmations before considering a transaction finalized, since forks can occur.
It is now well-understand that Bitcoin has severe performance limits. Not only does finalizing a transaction take an hour in expectation, but the overall network is limited to a throughput of around 7 transactions per second~\cite{FCW:CDE16}. For a survey on Bitcoin and cryptocurrencies, see Bonneau et al.~\cite{researchperspectives}.

The success of cryptocurrencies has further spurred interest in blockchain technology, which broadly refers to any secure database shared among multiple distrusting entities.
 In the case of decentralized cryptocurrencies like Bitcoin, these are open to the public, and are powered by the voluntary participation of anonymous ``miners.'' Alternatively, ``permissioned blockchains,'' resemble more traditional distributed state machines, rely on a defined set of participants, typically appointed by an administrative institution (e.g., a consortium of banks). In either case, blockchains derive their resilience through broad replication, which seems to come at inherent cost.

At a high level, the blockchain abstraction (see~\cite{EC:GKL14,gledger,hawk}) is an append-only replicated data structure that ensures the following properties:
 \begin{enumerate}
 \item All parties can agree on a consistent log of committed transactions.
 \item All parties are guaranteed to be able to commit new transactions in a predictable amount of time.
 \end{enumerate}
\noindent To elaborate on the latter property, what we want is a worst-case bound on how long it takes to learn about a committed transaction, then to publish a new transaction in response, and then for that transaction also to be committed. We call this time bound an ``on-chain round'' and denote it with $\Delta$. That is, an on-chain round takes at most $\Delta$ units of time. We say one unit of time corresponds to the maximum delay needed to transmit an (off-chain) point-to-point message from any one party to another.

 Nearly every blockchain cryptocurrency (such as Bitcoin, Ethereum, etc.) features (a) some built-in digital currency that can be transferred between users via on-chain transactions, and (b) some form of scripting language (e.g., Bitcoin script, or Ethereum Virtual Machine bytecode) for writing smart contracts that direct the flow of digital currency. Throughout this paper, we use smart contract pseudocode resembling both Ethereum and the UC framework~\cite{FOCS:Canetti01}, where contracts are written as reactive processes that respond to messages or method invocations.

\subsection{Blockchain scaling}
Proposed scalability improvements fall in roughly two complementary categories. The first, ``on-chain scaling'', aims to make the blockchain itself run faster~\cite{conf/uss/Kokoris-KogiasJ16,LuuNBZGS16,cryptoeprint:2016:917,NSDI:EGSR16}. A recurring theme is that the additional performance comes from introducing stronger trust assumptions about the nodes.

The second category of scaling approaches,  which includes our work, is to develop ``off-chain protocols'' that minimize the use of the block-chain itself. Instead, parties transact primarily by exchanging off-chain messages (i.e., point-to-point messages from one party to another), and interact with the blockchain only to settle disputes or withdraw funds.

\subsection{Off-chain Payment Channels}
\label{sec:background:paymentchannels}
The first off-chain protocols were Bitcoin payment channels, due to Spilman~\cite{spilman01}. In a payment channel, Alice opens a channel to Bob by initiating an on-chain deposit transaction, binding the deposit amount to a smart contract program.
The two parties can then make an arbitrary number of rapid payments between them, simply by exchanging signed messages off-chain.
A final on-chain transaction is needed to close the channel and distribute the final balance according to the code of the payment channel smart contract. The miners or validating nodes maintaining the blockchain never have to process the off-chain payments.

Spilman's payment channels only allow for unidirectional payments (i.e., the sender and receiver roles must be fixed at channel creation). Subsequent channel constructions by Decker and Wattenhofer~\cite{duplexchannels} as well as Poon and Dryja~\cite{lightning} supported ``duplex'' payments back-and-forth from either party to the other.
Because of the limitations of Bitcoin script, these constructions are subtle and require intricate workarounds (e.g., Poon and Dryja's channels require parties to store an ever-growing list of revocation keys to defend against malicious behavior).
Simpler payment channel constructions have been developed for Ethereum, based on signatures over round numbers~\cite{raiden,sparky,instantpoker}. For simplicity, we present only this latter approach; the underlying techniques are essentially the same.
An off-chain payment channel protocol roughly comprises the following three phases:

\noindent \textbf{Channel opening.} The channel is initially opened with an on-chain deposit transaction. 
This reserves a quantity of digital currency and binds it to the smart contract program.

\noindent \textbf{Off-chain payments.}
To make an off-chain payment, the parties exchange signed messages, reflecting the updated balance. For example, the current state would be represented as a signed message $(\sigma_A,\sigma_R,i,\$A,\$B)$, where a pair of signatures $\sigma_A$ and $\sigma_B$ are valid for the message $(i,\$A,\$B)$, where $\$A$ (resp. $\$B$) is the balance of Alice (resp. Bob) at round number $i$.
Each party locally keeps track of the current balance, corresponding to the most recent signed message.

\noindent \textbf{Dispute handling.} The blockchain smart contract (bound by the deposit transaction) serves as a ``dispute handler.'' It is activated when either party suspects a failure, or wishes to close the channel and withdraw the remaining balance.
  The dispute handler remains active for a fixed time, during which either party can submit evidence (e.g., signed messages) of their last-known balance. The dispute handler accepts the evidence with the highest round number and disburses the money accordingly.

  The security guarantees, roughly, are the following:    
  
  \noindent\textbf{(Liveness):}
  Either party can initiate a withdrawal, and the withdrawal is processed within a predictable amount of time. If both parties are honest, then payments are processed very rapidly (i.e., with only off-chain messages).

  \noindent\textbf{(No counterparty risk):}
  The payment channel interface offers Bob a local estimate of his current balance (i.e., how many payments he has received). Alice, of course, knows how much she has sent. The ``no counterparty risk'' property guarantees that local views are accurate, in the sense that each party can actually withdraw (at least) the amount they expect.




%

\subsection{Linked payments and payment channel networks}
\label{sec:background:linkedpayments}
Duplex payment channels alone cannot solve the scalability problem; opening each channel requires an on-chain transaction before any payments can be made. To connect every pair of parties in the network by a direct channel would require $O(N^2)$ transactions.

Poon and Dryja~\cite{lightning} developed a method for linking payments across multiple channels, suggesting the potential to connect every pair of participants through a sparse graph.
Consider the capacity graph where an edge between two participants represents an active payment channel with some available balance. If a path with sufficient capacity can be found between Alice and Bob, then Alice can send Bob an off-chain payment off-chain.

Linked payments are based on the ``hashed timelock contract'' (HTLC) for conditional payments that relies on a single hash $h = \hash(x)$ to synchronize a payment across all channels. 
Similar conditional payment techniques are used to facilitate across-blockchain transfers~\cite{tiernolan,bitcoinbook,mccorry2017atomically} and for establishing fair multiparty computation~\cite{C:BK14,CCS:KB14}.
We denote an HTLC conditional payment from $P_1$ to $P_2$ by the following:

{\centering
$$
P_1 \xrightarrow[\hsquad {h,T}\hsquad]{\$X} P_2
$$
}

\noindent which says that a payment of $\$X$ can be claimed by $P_2$ if the preimage of $h$ is revealed (i.e., by publishing a secret $x$ such that $h = \hash(x)$, via an on-chain transaction).
Otherwise, the conditional payment can be canceled after a deadline $T$.
Operationally, opening a conditional payment means signing a message that defines the deadline, the amount of money, and the hash of the secret $h = \hash(x)$; and finally sending the signed message to the recipient.
Conditional payments can also complete rapidly off-chain in the optimistic case: the sender signs a new message representing an \emph{un}conditional payment, with a higher round number to supercede the conditional payment.

Consider a path of parties, $P_1,...,P_\ell$, where $P_1$ is the sender, $P_\ell$ is the recipient, and the rest are intermediaries.
In a linked off-chain payment, Each node $P_i$ opens a conditional payment to $P_{i+1}$, one after another.

{\centering
\begin{equation} \tag{\text{\Lightning}}
P_1 \xrightarrow[\hsquad {h,T_1 = T_{\ell-1} + \Theta(\ell\Delta)}\hsquad]{\$X} P_2
~...~
P_{\ell-1} \xrightarrow[\hsquad {h,T_{\ell-1}}\hsquad]{\$X} P_\ell
\end{equation}
}

\noindent Note that the hash condition $h$ is the same for all channels. However, the deadlines may be different. In fact, Lightning requires that $T_1 = T_\ell + \Theta(\ell\Delta)$ as we explain shortly.
The desired security properties of linked payments are the following (in addition to those for basic channels given above):

\noindent\textbf{(Liveness):}
The entire chain of payments concludes (completes successfully or is canceled) within a bounded amount of time (measured in on-chain transaction cycles). If all parties on the path are honest (i.e., do not crash or fail), then the entire payment should complete successfully, using only off-chain messages (i.e., not depending on $\Delta$). 

\noindent\textbf{(No counterparty risk):}
A key desired property is that intermediaries (along the path from sender to receiver) should not be placed at risk of losing funds.
During the linked payments protocol, a portion of the channel balance may be ``locked'' and held in reserve, but it must returned by the conclusion of the protocol (regardless of whether the payment completes or cancels).%
\footnote{The intermediary nodes in a path can also be incentivized to participate in the route if the sender allocates an extra fee that will be shared among them.}
This guarantee must hold regardless of which parties are corrupted.
This property poses a challenge that constrains the choice of deadlines $\{T_i\}$ in Lightning. Consider the following scenario from the point of view of party $P_i$.

{\centering
$$
 ... ~  P_{i-1} \xrightarrow[\hsquad {h,T_i}\hsquad]{\$X} P_i
  \xrightarrow[\hsquad {h,T_{i+1}}\hsquad]{\$X} P_{i+1} ~ ...
$$
}

\noindent We need to ensure that if the outgoing conditional payment to $P_{i+1}$ completes, then the incoming payment from $P_{i-1}$ also completes.
In the worst case where $P_{i+1}$ attempts to introduce the maximum delay for $P_i$ (which we call the ``petty'' attacker), the party $P_i$ only learns about $x$ because $x$ is published in the blockchain at the last possible instant, at time $T_{i+1}$.
In order to complete the incoming payment, if $P_{i-1}$ is also petty then $P_i$ must publish $x$ to the blockchain by time $T_{i}$. It must therefore be the case that $T_{i} \ge T_{i+1} + \Delta$, meaning $P_i$ is given an additional grace period of time $\Delta$ (the worst-case bound on the time for one on-chain round).

We use the term ``collateral cost'' to denote the product of the amount of money $\$X$ multiplied by the locktime (i.e., from when the conditional payment is opened to the time it is completed or canceled). Since the payment can be claimed by time $T_\ell + \Theta(\ell\Delta)$ in the worst case, the overall collateral cost is $\Theta(\ell^{2}\$X \Delta)$ for each party (see Figure~\ref{fig:highlight} (a)). The main goal of our Sprites construction (Section~\ref{sec:highlevel}) is to reduce this collateral cost.


\subsection{Related Work}\label{sec:related}
\paragraph{Improvements to Payment Channels}
Payment channel networks have recently seen significant research interest, with several concurrent efforts to improve their performance and security.

Gervais et al.~\cite{revive} proposed the ``Revive'' protocol for rebalancing payment channels off-chain. We incorporate this into our simulation experiment in Section~\ref{sec:simulation}.

While in this paper we focus on the mechanism for executing linked payments, it remains an open problem how best to find routes through payment channel networks. Flare~\cite{flare} and Landmark Routing~\cite{privatepaymentchannels} are two proposed methods. We reproduce an experiment for the former~\cite{flare}, but leave the latter as future work.

Dziembowski et al.~\cite{perun} developed a mechanism for virtual payment channel overlays. This allows two parties with a path to establish a faster channel between them. This is complementary to our work, and we think both approaches could be fruitfully combined.

Green and Miers~\cite{journals/iacr/GreenM16} as well as Moreno-Sanchez et al.~\cite{heilman2016tumblebit} present hub-based off-chain payment protocols that offer privacy but cannot support linked payments more than one hop away from the hub.
Malavolta et al.~\cite{privatepaymentchannels} develop a privacy-oriented construction for linked payment channels, which is complementary to our work.

\paragraph{Credit networks}
Malavolta et al.~\cite{silentwhispers} developed a protocol for privacy-preserving credit networks. The main difference between a payment channel and a credit line is that payment channel balances are fully backed by on-chain deposits, and can be settled without any counterparty risk; lines of credit seem inherently to expose counterparty risk.


\paragraph{Probabilistic micropayments}
An alternative approach for off-chain micropayments is a lottery-based construction by Pass and
shelat~\cite{CCS:PassS15}. However, this requires either a semi-trusted third party, or else to lock up additional collateral (larger than the total amount of money that can be paid) to avoid a ``front-running'' attack.
Chiesa et al.~\cite{Chiesa17} also design lottery-based micropayments that provide strong privacy, but assume rational adversaries.

\paragraph{Federated sidechains}
A related proposal is to run a ``sidechain,'' consisting of an off-chain consensus protocol run amongst a set of nodes called ``functionaries,'' which jointly control a balance of on-chain deposits~\cite{pegging,strongfederations}. For example, to withdraw from a sidechain requires signatures from a majority (e.g., 5 out of 7) of the functionaries. This can be instantiated with consensus protocols that bootstrap from an existing blockchain~\cite{conf/uss/Kokoris-KogiasJ16,LuuNBZGS16,cryptoeprint:2016:917}. The main difference is that payment channels protocols guarantee security in a stronger sense, even if all the parties are corrupted, whereas in federated sidechains, a majority of functionaries misbehaving could compromise security, e.g. steal funds.

\section{Overview of the Sprites construction}
\label{sec:highlevel}

We first give a high-level overview of our construction, focusing on the main improvements versus Lightning~\cite{lightning}: constant locktimes and incremental withdrawals/deposits.
We assume as a starting point the duplex payment channel construction described earlier in Section~\ref{sec:background:paymentchannels} and presented in related works~\cite{instantpoker,raiden,sparky}).

\subsection{Constant locktime linked payments.}
To support linked payments across multiple payment channels, we use a novel variation of the standard ``hashed timelock contract'' technique~\cite{C:BK14,CCS:KB14,tiernolan,lightning}.

We start by defining a simple smart contract, called the PreimageManager ($\msf{PM}$), which simply records assertions of the form ``the preimage $x$ of hash $h = \hash(x)$ was published on the blockchain before time $T_{\msf{Expiry}}$.'' This can be implemented in Ethereum as a smart contract with two methods, $\mtt{publish}$ and $\mtt{published}$ (see Figure~\ref{fig:prot-chain-basic}).

Next we extend the duplex payment channel construction with a conditional payment feature, which can be linked across a path of channels as shown:

{\centering
\begin{equation} \tag{$\star$}
P_1 \xrightarrow[\hsquad {\msf{PM}\left[ h,T_\msf{Expiry}\right]}\hsquad]{\$X} P_2
~...~
P_{\ell-1} \xrightarrow[\hsquad {\msf{PM}\left[h,T_{\msf{Expiry}}\right]}\hsquad]{\$X} P_\ell
\end{equation}
}

\noindent In the above, the conditional payment of $\$X$ from $P_1$ to $P_2$ can be completed by a command from $P_1$, canceled by a command from $P_2$, or in case of dispute, will complete if and only if the $\msf{PM}$ contract receives the value $x$ prior to $T_\msf{Expiry}$.
As with the existing linked payments constructions~\cite{raiden,sparky}, operationally this means extending the structure of the signed messages (i.e., the off-chain state) to include a hash $h$, a deadline $T_\msf{Expiry}$, and an amount $\$X$. To execute the linked payment, each party first opens a conditional payment with the party to their right, each with the same conditional hash. Note that here the deadline $T_\msf{Expiry}$ is also a common value across all channels.

The difference between Sprites and Lightning is how Sprites handles disputes. Instead of each payment channel smart contract making a local decision about whether the preimage $x$ was revealed on time, in Sprites we delegate this to the global $\msf{PM}$ contract.
In short, each Sprites contract defines a dispute handler that queries $\msf{PM}$ to check if $x$ was revealed on time, guaranteeing that all channels (if disputed on-chain) will settle in a consistent way (either all completed or all canceled). It then suffices to use a single common expiry time $T_\msf{Expiry}$, as indicated above ($\star$).

The preimage $x$ is initially known to the recipient; after the final conditional payment to the recipient is opened, the recipient publishes $x$, and each party completes their outgoing payment. Optimistically, (i.e., if no parties fail), the process finishes after only $\ell+1$ off-chain rounds. Otherwise, in the worst case, any honest parties that completed their outgoing payment submit $x$ to the $\msf{PM}$ contract, guaranteeing that their incoming payment will complete, and thus conserving their net balance. This procedure ensures that each party's collateral is locked for a maximum of $O(\ell + \Delta)$ rounds.
%

The worst-case delay scenarios for both Lightning and Sprites are illustrated in Figure~\ref{fig:timelines}. The worst-case delay in either case occurs when an attacker publishes the preimage $x$ on-chain at the latest possible time. However, the use of a global synchronizing gadget, the $\msf{PM}$ contract, ensures that all payments along the path are settled consistently.
In contrast, Lightning~\cite{lightning} (and other prior payment channel networks ~\cite{raiden,duplexchannels,privatepaymentchannels,perun})
require the preimage to be submitted to \emph{each} payment channel contract separately, leading to longer locktimes.

\begin{figure}
\includegraphics[width=\columnwidth]{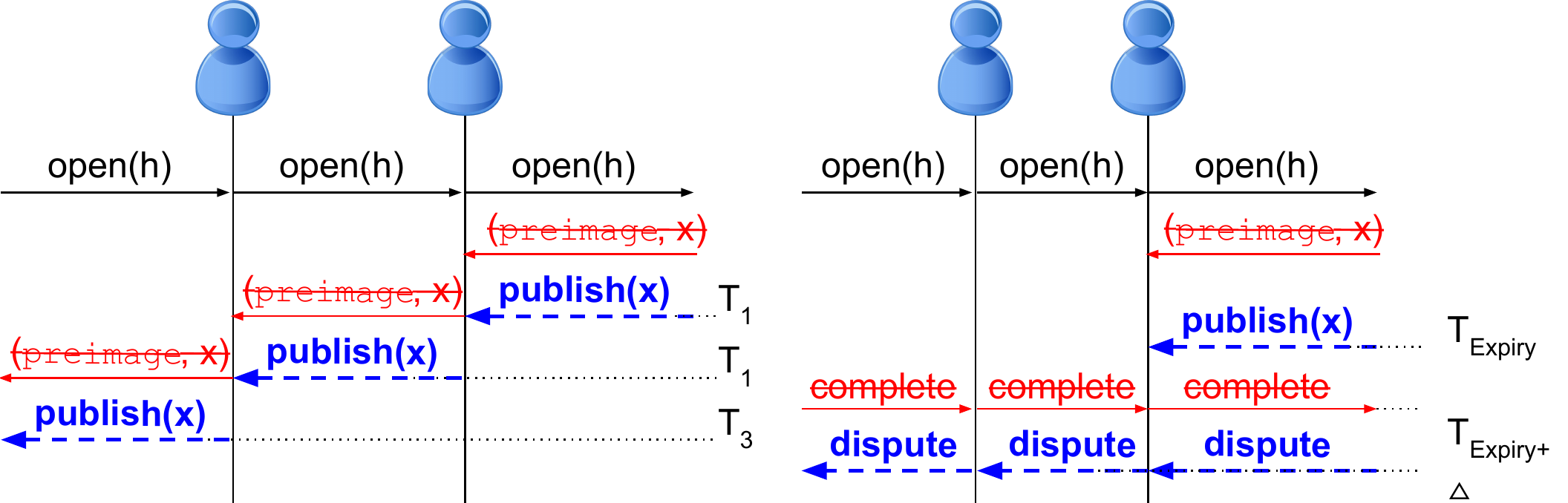}
\caption{The worst-case delay scenario, in Lightning (left) and in Sprites (right). The two parties shown are ``petty,'' dropping off-chain messages ({\color{red} \st{striken red}}) after the initial \msf{open}, and sending on-chain transactions ({\color{blue} blue}) only at the last minute. Disputes in Lightning may cascade, whereas in Sprite they are handled simultaneously.}
\label{fig:timelines}
\end{figure}

\subsection{Supporting incremental deposits and withdrawals.}
A Lightning channel must be closed and re-opened in order for either party to withdraw or deposit currency.
Furthermore, all pending conditions  must be settled on-chain and no new off-chain transactions can occur for an on-chain round ($O(\Delta)$ time) until a new channel is opened on the blockchain. 
On the other hand, Sprites permits either party to deposit/withdraw a portion of currency without interrupting the channel.

To support incremental deposits, we extend the off-chain state to include local views,  $\msf{deposits}_{\{\L,\R\}}$, which reflect the total amount of deposits from each party recorded by the smart contract. If one party proposes a view that is too stale (i.e., more than some bound $O(\Delta)$ behind), then the other party initiates an on-chain dispute. Of course, the on-chain dispute handler can read the current on-chain state directly.

To support incremental withdrawals, we implement the following. We extend the off-chain state with an optional withdrawal value $\msf{wd}_i$, which can be set whenever either party wishes to make a withdrawal (of course, both parties only sign off on such state if there is sufficient balance). The on-chain smart contract is then extended with an \mtt{update} method that either party can invoke to submit a signed message with a withdrawal value. Rather than close, the smart contract verifies the signatures, disburses the withdrawal, and advances the round number to prevent replay attacks. Further off-chain payments can continue, even while waiting for the blockchain to confirm the withdrawal.
 





\section{The State Channel Abstraction}
\label{sec:statechannel}
In this section we present the general purpose ``state channel'' abstraction, which is the key to our modular construction of Sprites payment channels.
A state channel generalizes the ``off-chain payment channels'' mechanism as described in Section~\ref{sec:background:paymentchannels}.
The state channel primitive exposes a simple interface:  a consistent replicated state machine, shared between two or more parties.
The state machine evolves according to an arbitrary, application-defined transition function. It proceeds in rounds, during each of which inputs are accepted from every party.
This primitive neatly abstracts away the on-chain dispute handling behavior and the use of off-chain signed messages in the optimistic case.

Each time the parties provide input to the state channel, they exchange signed messages on the newly updated state, along with an increasing round number.

If at any time a party fails (or responds with invalid data), remaining parties can raise a dispute by submitting the most recent agreed-upon state to the blockchain, along with inputs for the next round.
Once activated, the dispute handler proceeds in two phases.
First, the dispute handler waits for one on-chain round, during which any party can submit their evidence (i.e., the most recently signed message confirming an agreed-upon state).
The dispute handler checks the signatures on the submitted evidence, and ultimately commits the state with the highest round number. After committing the previous state, the dispute handler then allows parties to submit new inputs for the next round.

The use of the term ``state channel'' to denote a generalized payment channel appears in folklore~\cite{folklore}, however the concept has not yet been precisely formalized.
A novel feature of our model is a general way to express side effects that the state channel has on the blockchain.
Besides the inputs provided by parties, the application-specific transition function can also depend on auxiliary input from an external contract $C$ on the blockchain (which, for example, can collect currency deposits submitted by either party). The transition function can also define an auxiliary output for each transition, which is translated to a method invocation on the external smart contract $C$ (e.g., triggering a disbursement of \textbf{coins}). This feature generalizes the handling of withdrawals as transfers of on-chain currency.
We now present a security definition for state channels as an ideal functionality, followed by our construction in more detail.


%


%

%


\subsection{Modeling a State Channel as an ideal functionality}
Following several prior works,~\cite{gledger,hawk,amiller-thesis,C:BK14,CCS:KB14,CCS:KB16,CCS:KVV16,EC:KZZ15}, we formally specify our smart contract protocol as an ideal functionality, based on the UC simulation-based security framework~\cite{FOCS:Canetti01}.
%
%
The ideal functionality for state channels, $\F{State}$, is defined in Figure~\ref{fig:f:statechannel}. This functionality is parameterized by an update function, $U$, which can be customized by a developer to specialize the state channel for different applications.
The functionality proceeds in rounds,
where in each round inputs from all parties $\{P_i\}$  are accumulated
within a bounded time $O(\Delta)$. For any parties that fail
to provide input in time, a default value $\bot$ is assumed.
Finally, the state channel applies the state transaction function
to the previous round's state using the new inputs collected before broadcasting the new state to each party. 

A key technical contribution of our formalism is a generic way to capture side effects on the blockchain, which is essential for composition of smart contract protocols. Roughly, we model a hybrid world where ideal functionalities and blockchain smart contracts  interact, i.e. an ideal functionality can read from and post messages to the blockchain and invoke smart contract methods. This notion is a natural extension of the global ledger functionality~\cite{gledger}, which models the blockchain as a shared resource accessible in both the real and ideal worlds, and the commonly-used ``coins'' model~\cite{C:BK14,CCS:KB14,CCS:KB16,CCS:KVV16,EC:KZZ15}, in which ideal functionalities and parties can both send and receive money.
The $\F{State}$ functionality is parameterized with a reference to (i.e., the address of) an external blockchain smart contract $C$, with which the functionality communicates through the \mtt{aux\_input} and \mtt{aux\_output} methods.
Incoming messages from contracts are delayed by
a time of up to $\Theta(\Delta)$, reflecting the fact that
on-chain deposits are guaranteed to be available after one on-chain round.
Outgoing messages are also delayed
by up to $\Theta(\Delta)$, reflecting that an on-chain transaction
is needed to apply the on-chain action.

To summarize, in each round, the $\F{State}$ invokes the transition
update function $U$ on inputs $\msf{state}$ (the
previous state), the inputs $\{\vri\}$
supplied by the parties, and the external contract input
$\msf{aux}_{in}$ collected from $C$.
Finally, the updated state is sent to all players
within a bounded time delay of $\Theta(\Delta)$.

\begin{figure}
\begin{minipage}{\columnwidth}
\begin{framed}
    \centering { \bf Functionality $\F{State}(U,
      C, P_1, ..., P_N)$ } 

\begin{itemize}[leftmargin=3mm]
\item Initialize $\auxin := [\bot]$, $\msf{ptr} :=
0$, $\msf{state} := \emptyset$, $\msf{buf} := \emptyset$

\item on \textbf{contract input} $\mtt{aux\_input}(m)$ from $C$:
  \begin{itemize}
  \item[] append $m$ to $\msf{buf}$, and let $j := |\msf{buf}|-1$
  \item[] within $\Delta$: set $\msf{ptr} := \max(\msf{ptr},j)$
  \end{itemize}


\item proceed sequentially according to virtual rounds $r$, initially $r := 0$
  \begin{itemize}
  \item[] for each party $P_i$:
    \begin{itemize}
    \item[] wait to receive input $\vri$
    \item[] if $\vri$ is not received within $O(\Delta)$ time, set $\vri := \bot$
    \item[] leak $(i, \vri)$ to $\A$
    \end{itemize}
  \item[] after receiving all inputs,
    \begin{itemize}
    \item[] $(\msf{state}, o) := U(\msf{state}, \{\vri\}, \auxin[\msf{ptr}])$
    \item[] send $\msf{state}$ to each player
      within time $O(1)$ if all parties are honest, and within $O(\deltaoffchain)$ otherwise;
    \item[] if $o \ne \bot$, within $O(\Delta)$ invoke $C.\mtt{aux\_output}(o)$
    \end{itemize}
  \end{itemize}
\end{itemize}

\end{framed}
\end{minipage}
\caption{Ideal functionality for general purpose state channel}
\label{fig:f:statechannel}
\end{figure}


\paragraph{Security properties exhibited by the $\F{State}$ functionality}
As mentioned, the functionality
maintains a singular
sequential view of the current state, which is delivered consistently to each party. In each round, inputs from every party are included. The state is updated correctly according to the application-defined transition function.
We note that the 
specification provides no input privacy
(as $\F{State}$ explicitly leaks inputs to \adv),
and in fact the adversary can front-run (i.e., adversarial
inputs in a round can depend on honest party inputs).

We remark that the ideal functionality $\F{State}$ exhibits fine-grained liveness and timing guarantees.
First, in the optimistic case when both parties are honest, the payment is guaranteed to complete within a small amount of time (off-chain messages only). Even a malicious party cannot delay the advancing of rounds very much, since if they timeout their input is replaced with $\bot$ and execution proceeds.
Second, the functionality also guarantees that for each state transition with a side effect, the side effect is applied on the blockchain (i.e. the $C.\mtt{aux\_output}$ is invoked) exactly once, within a bounded time. Furthermore, the functionality guarantees that external contract inputs from $C$ (i.e., \mtt{aux\_input} method invocations) are incorporated in the inputs to the update function within a bounded time.

\subsection{Instantiating state channels}
We focus on explaining the behavior of the dispute handler smart contract, $\msf{Contract}_\msf{State}$, defined in Figure~\ref{fig:prot:statechannel:contract}, which is the protocol centerpiece; a detailed description of the local behavior for each party is deferred to the appendix (\ref{sec:prot:state:appendix}).
At a high level, the off-chain state can be advanced by having parties exchange a signed message of the following form (for the party $P_i$):
$$
\sigma_{r,i} := \msf{Sign}_{P_i}( r \| \msf{state}_r \| \msf{out}_r).
$$
where $r$ is the number of the current round, $\msf{state}_r$ is the result after applying the state transition function to every party's inputs, and $\msf{out}_r$ is the resulting blockchain output (or $\bot$ if this transition makes no output).
In the appendix we describe a leader-based broadcast protocol used to help parties optimistically agree on a vector of inputs.
We now explain how $\msf{Contract}_\msf{State}$  handles disputes, as illustrated in Figure~\ref{fig:statetransitions}.

\begin{figure}[!ht]
        { \centering \bf Protocol $\Pi_\msf{State}(U, P_1, ... P_N)$    \\}
  \vspace{2pt}
        \input{sections/prot_state_contract}
        \caption{Contract portion of the protocol $\Pi_\msf{State}$ for implementing a general purpose state channel, $\F{State}$.}
        \label{fig:prot:statechannel:contract}
\end{figure}

\begin{figure}
  \includegraphics[width=\columnwidth]{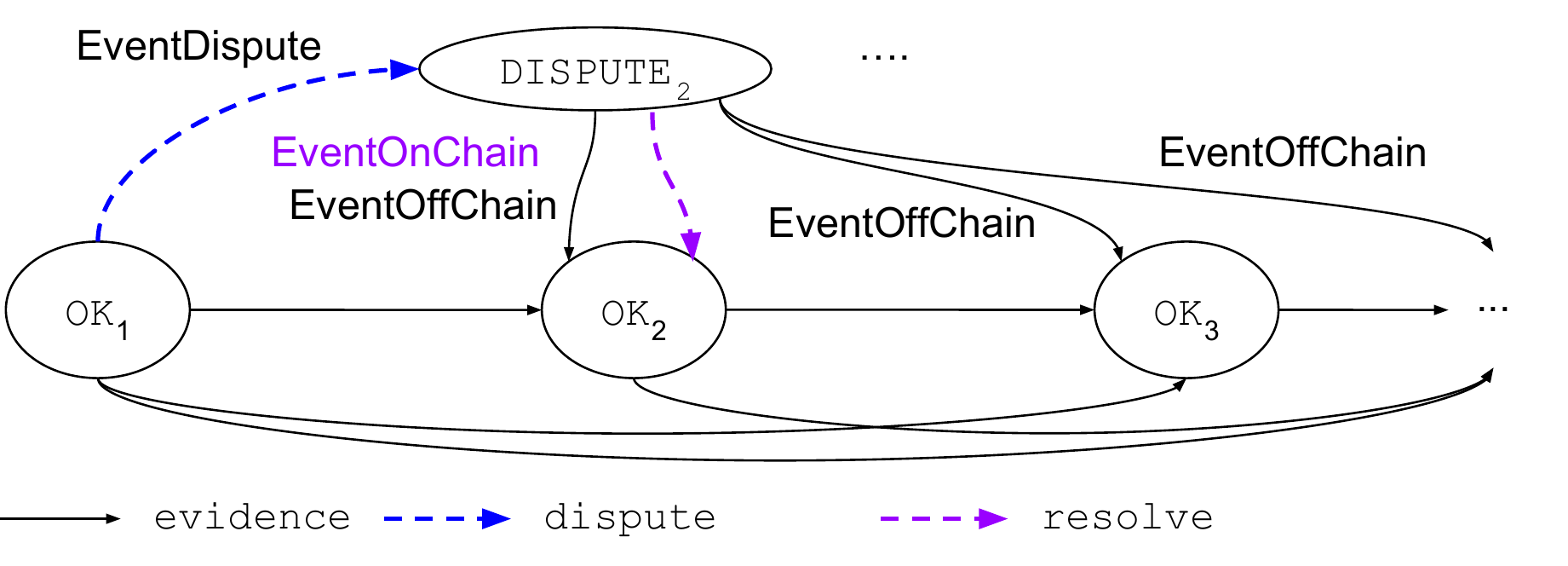}
  \caption{State transitions in $\msf{Contract}_\msf{State}$. Any party may invoke the $\msf{dispute}$ method to transition from $\mtt{OK}_r$ to $\mtt{DISPUTE}_{r+1}$. From any $\mtt{DISPUTE}_r$ state (or $\mtt{OK}_r$), an $\mtt{evidence}(r')$ message with signatures from all parties can transition to $\mtt{OK}_{r'}$ for any $r' \ge r$ or later round. If no $\mtt{evidence}$ is invoked by $T+\Delta$, the $\mtt{resolve}$ method can advance the state by applying $U$ to inputs provided on-chain.}
  \label{fig:statetransitions}
\end{figure}


\noindent \textbf{Raising a dispute.}
Suppose in round $r$ a party fails to receive signatures from all the other parties (i.e., evidence) for some $(\msf{state}_r,\msf{out}_r)$ before an $O(1)$ timeout. They then 1) invoke the $\mtt{evidence}$ method to provide evidence that round $(r-1)$ has already been agreed upon and can be used as a checkpoint, and 2) invoke the $\mtt{dispute}(r)$ method, which notifies all the other parties (\mtt{EventDispute}).

\noindent \textbf{Resolving disputes off-chain.}
Once raised, a dispute for round $r$ will be resolved in one of two ways.
First, another party may invoke the $\mtt{evidence}(r',...)$ method to provide evidence that an $r$ or a later round $r' \ge r$ has already been agreed upon off-chain, clearing the dispute (\mtt{EventOffchain}). This can occur, for example, if a corrupted node attempts to disputes an earlier already-settled round.

\noindent \textbf{Resolving disputes on-chain.}
Alternatively, if a party $P_j$ has no more recent evidence than $(r-1)$, they invoke the $\mtt{input}$ method on-chain with their input $v_{r,j}$.  After the deadline $T+\Delta$, any party can invoke the $\mtt{resolve}$ method to apply the update function to the on-chain inputs (\mtt{EventOnchain}).

\noindent \textbf{Avoiding on-chain / off-chain conflicts.}
We now explain how we avoid a subtle concurrency hazard.
Suppose in round $r$, a party receives the $\mtt{Dispute}(r,T)$ event, and shortly thereafter (say, $T+\epsilon$, for some $\epsilon>0$), receives a final signature completing the off-chain evidence for round $r$.
It would be incorrect for the party to then invoke $\mtt{evidence}(r,...)$, since
this invocation may not be confirmed until after $T+\Delta+\epsilon$.
If a malicious adversary equivocates, providing $\mtt{input}(\vrj')$ on-chain but $\vrj$ off-chain, the off-chain evidence would arrive too late.
Instead, upon receiving a $\mtt{Dispute}(r)$ event, if the party does not
already have evidence for round $r$, it pauses the off-chain routine until the dispute is resolved.




\begin{theorem}
The $\Pi_{\msf{State}}$ protocol realizes the 
$\F{State}$ functionality assuming one way functions exist. 
\end{theorem}
In the appendix we construct a simulator that translates every behavior in the real world with $\msf{Contract}_\msf{State}$ to an adversary in the ideal world with $\F{State}$, and argue that in every case the two worlds are indistinguishable.




\subsection{Modeling payment channels as an ideal functionality}
To demonstrate the use of the $\F{State}$ abstraction (and as a warmup to our full construction in Section~\ref{sec:paymentchains}) we now construct a duplex payment channel (e.g., as in~\cite{instantpoker,sparky,raiden}).
We first present our security model as an ideal functionality $\F{Pay}$.
%
Recall that a payment channel is established between two parties via a deposit of on-chain currency. Once established with a deposit of on-chain currency, the parties can rapidly pay each other by transferring a portion of this balance using only off-chain messages, resorting to interaction with the blockchain only in case of a dispute or mutual agreement to terminate. 
%
%
Our payment channel ideal functionality is defined in Figure~\ref{fig:fpay}.
It is parameterized by the (pseudonyms of) a pair of parties, $P_\mtt{L}$ and $P_\mtt{R}$, which are fixed at channel creation time (e.g., via an Ethereum transaction).
The functionality keeps track of the local balance of parties, $\msf{bal}_{\{\L,\R\}}$. It also defines a contract input method \mtt{deposit}, which can be invoked through an on-chain transaction and has the side effect of transferring \textbf{coins} from the party to the contract.
The \mtt{pay} method debits the sender's balance immediately, but increments the recipient's balance after a bounded delay. This models the fact that honest senders will immediately subtract the payment from their local view, but the recipient will only update their view after the parties reach agreement off-chain (or settle a dispute on-chain).
Finally, the \mtt{withdraw} method triggers a disbursement of \textbf{coins}.
All these method invocations are immediately leaked to the adversary, reflecting the fact that our model does not aim to capture privacy guarantees.

We now explain how the payment channel functionality $\F{Pay}$ exhibits the following properties desired of a payment channel (recalled from Section~\ref{sec:background:paymentchannels}):

\noindent\textbf{(No counterparty risk):} The functionality processes each $\mtt{withdraw}(\$X)$ message from party $P_i$ according to its record of their balances $\msf{bal}_i$, sending $\textbf{coins}(\$X)$ if $\$X \le \msf{bal}_i$.

\noindent\textbf{(Liveness):} Each payment command $\mtt{pay}(\$X)$ is processed within a bounded time. In fact the functionality provides stronger time bounds when both parties are honest, reflecting that in the optimistic case payments are completed using only off-chain communication; even in the case that some party is corrupt, progress is guaranteed within $O(\Delta)$ rounds by the on-chain dispute process.

\begin{figure}
  \begin{framed}
\centering    { \bf Functionality $\F{Pay}(P_\L, P_\R)$ }

\begin{itemize}
\item[] initially, $\msf{bal}_\L := 0, \msf{bal}_\R := 0$
\item[] on \textbf{contract input} $\mtt{deposit}( \textbf{coins}(\$X) )$ from $P_i :$
  \begin{itemize}
  \item[] within $O(\Delta)$ rounds:
    $\msf{bal}_i \pe \$X$
  \end{itemize}
\item[] on \textbf{ideal input} $\mtt{pay}(\$X)$ from $P_i:$
  \begin{itemize}
  \item[] discard if $\msf{bal}_i < \$X$
  \item[] leak $(\mtt{pay},P_i,\$X)$ to $\A$
  \item[] $\msf{bal}_{i} \me \$X$
  \item[] within $O(1)$ if $P_{\neg i}$ is honest, or else $O(\Delta)$ rounds:
    \begin{itemize}
    \item[] $\msf{bal}_{\neg i} \pe \$X$
    \item[] send $(\mtt{receive}, \$X)$ to $P_{\neg i}$
    \end{itemize}
  \end{itemize}
\item[] on \textbf{ideal input} $\mtt{withdraw}( \$X )$ from $P_i:$
  \begin{itemize}
  \item[] discard if $\msf{bal}_i < \$X $
  \item[] leak $(\mtt{withdraw},P_i,\$X)$ to $\A$
  \item[] $\msf{bal}_i \me \$X $
  \item[] within $O(\Delta)$ rounds:
    \begin{itemize}
    \item[] send $\textbf{coins}(\$X)$ to $P_i$
    \end{itemize}
  \end{itemize}
\end{itemize}
\end{framed}
  \caption{Functionality model for a bidirectional off-chain payment channel.}
  \label{fig:fpay}
\end{figure}

\subsection{Constructing $\F{Pay}$ from $\F{State}$}
%

In Figure~\ref{fig:prot-pay} we give a construction that realizes $\F{Pay}$ in the $\F{State}$-hybrid world.  
Our construction consists of 1) an update function, $U_\msf{Pay}$, which defines the structure of state and the inputs provided by parties, 2) an auxiliary contract $\msf{Contract}_\msf{Pay}$ that handles deposits and withdrawals, and 3) local behavior for each party.


The update function $U_\msf{Pay}$ alone is somewhat more complicated than the $\F{Pay}$ functionality; in particular, while $\F{Pay}$ uses a single field representing the available balance of each party, $\msf{bal}_i$, the update function represents this as two fields, $\msf{cred}_i$ and $\msf{deposits}_i$.
This encoding is designed to cope with the fact
that $\F{State}$ only guarantees that auxiliary
inputs are loosely synchronized with the state
updates. If multiple deposits are received
within a short timeframe, it may be that only the most recent
deposit is passed as input to $U_\msf{Pay}$.
So when
$\msf{Contract}_\msf{Pay}$ receives a deposit of
$\textbf{coins}(x)$, we have it accumulate in a
monotonically increasing value,
$\msf{deposits}_i$, that can safely be passed to
$\mtt{aux\_input}$. The state then includes $\msf{cred}_i$ as a (possibly negative) balance offset, such that balance  available to $P_i$ is $\msf{deposits}_i + \msf{cred}_i$.
In contrast, although the $\F{State}$
functionality guarantees that each (non-$\bot$)
auxiliary output is eventually processed, they are
not necessarily in order. Since the $\F{Pay}$
functionality makes similar guarantees, it is safe
to pass the $\msf{wd}_{\{\L,\R\}}$ values directly
to $C.\mtt{aux\_output}$.
Since parties' inputs are not validated before being committed, we have $U_\msf{Pay}$  clamp each party's $\msf{pay}$ input to within the available balance, and then clamp $\msf{wd}$ to the remaining balance after that.

For the protocol to be proven secure, it must precisely match the interface of the ideal functionality. This includes exhibiting the same ``batching'' behavior. The local protocol translates $\mtt{pay}$ and $\mtt{withdraw}$ invocations into inputs of the form $(\msf{pay}_i,\msf{wd}_i)$ passed to $\F{State}$. Since $\F{State}$ accepts inputs round by round, but $\F{Pay}$ invocations may arrive at any time, the local protocol must accumulate the total of $\msf{pay}_i$ and $\msf{wd}_i$ inputs until the next $\F{State}$ round begins. However, since payments in $\F{Pay}$ are delivered one at a time, rather than batched, they must also be be delivered one at a time in the protocol. We therefore include along with the total $\msf{pay}_i$, a list $\msf{arr}_i$ of the individual payment amounts.

\begin{figure}[!ht]
  \begin{minipage}{\columnwidth}
  \begin{framed}
    \vspace{-4pt}
{\centering \bf  Update function $U_\msf{Pay}$ \\}    
\begin{itemize}
\item[] $U_\msf{Pay}( \msf{state}, (\msf{input}_\L, \msf{input}_\R), \auxin): $
  \begin{itemize}
  \item[] if $\msf{state} = \bot$, set $\msf{state} := (0, \emptyset, 0, \emptyset)$
  \item[] parse \msf{state} as $(\msf{cred}_\L, \msf{oldarr}_\L, \msf{cred}_\R, \msf{oldarr}_\R)$
  \item[] parse $\msf{aux}_\msf{in}$ as $\{\msf{deposits}_i\}_{i \in \{\L,\R\}}$
  \item[] for $i \in \{\L,\R\}$:
    \begin{itemize}
    \item[] if $\msf{input}_i = \bot$ then $\msf{input}_i := (\emptyset, 0)$
    \item[] parse each $\msf{input}_i$ as $(\msf{arr}_i, \msf{wd}_i)$
    \item[] $\msf{pay}_i := 0, \msf{newarr}_i := \emptyset$
    \item[] while $\msf{arr}_i \neq \emptyset$
    \begin{itemize}
      \item[] pop first element of $\msf{arr}_i$ into $e$
      \item[] if $e + \msf{pay}_i \leq \msf{deposits}_i + \msf{cred}_i$:
      \item[] \hspace{2mm} append $e$ to $\msf{newarr}_i$
      \item[] \hspace{2mm} $\msf{pay}_i \pe e$
    \end{itemize}
    \item[] if $\msf{wd}_i > \msf{deposits}_i + \msf{cred}_i - \msf{pay}_i$:  $\msf{wd}_i := 0$ 
    \end{itemize}
  \item[] $\msf{cred}_\L \pe  \msf{pay}_\R - \msf{pay}_\L - \msf{wd}_\L$
  \item[] $\msf{cred}_\R \pe  \msf{pay}_\L - \msf{pay}_\R - \msf{wd}_\R$
  \item[] if $\msf{wd}_\L \neq 0$ or $\msf{wd}_\R \neq 0$:
    \begin{itemize}
    \item[] $\auxout := (\msf{wd}_\L, \msf{wd}_\R)$
    \end{itemize}
  \item[] otherwise $\auxout := \bot$
  \item[] $\msf{state} := (\msf{cred}_\L, \msf{newarr}_\L, \msf{cred}_\R, \msf{newarr}_\R)$
  \item[] return $(\auxout, \msf{state})$
  \end{itemize}
\end{itemize}

\hrule
\vspace{4pt}
{\centering \bf  Auxiliary smart contract $\msf{Contract}_\msf{Pay}(P_\L, P_\R)$ \\}

\begin{itemize}
\item[] Initially, $\msf{deposits}_\L := 0, \msf{deposits}_\R := 0$
  \vspace{2pt}
\item[] on \textbf{contract input} \mtt{deposit}$(\textbf{coins}(\$X))$ from $P_i:$
  \begin{itemize}
  \item[] $\msf{deposits}_i \pe \$X$
  \item[] $\auxin.\msf{send}(\msf{deposits}_\L, \msf{deposits}_\R)$
  \end{itemize}

\item[] on \textbf{contract input} $\mtt{output}(\auxout)$:
  \begin{itemize}
  \item[] parse $\auxout$ as $(\msf{wd}_\L, \msf{wd}_\R)$
  \item[] for $i \in \{\L,\R\}$ send $\textbf{coins}(\msf{wd}_i)$ to $P_i$
  \end{itemize}
\end{itemize}

\hrule
\vspace{4pt}
{\centering \bf  Local protocol $\Pi_\msf{Pay}$ for party $P_i$ \\}
\begin{itemize}
\item[] initialize $\msf{arr}_i := \emptyset$, $\msf{pay}_i := 0$, $\msf{wd}_i := 0$, $\msf{paid}_i=0$
\item[] provide $(0,0)$ as input to $\F{State}$
\item[] on \textbf{receiving state} $(\msf{cred}_\mtt{L},\msf{new}_\mtt{L},\msf{cred}_\mtt{R},\msf{new}_\mtt{R})$ from $\F{State}$,
  \begin{itemize}
  \item[] foreach $e$ in $\msf{new}_i$:
    \begin{itemize}
    \item[] output $(\mtt{receive}, e)$
    \item[] $\msf{paid}_i \pe  e$
    \end{itemize}
  \item[] provide $(\msf{arr}_i,\msf{wd}_i)$ as input to $\F{State}$
  \item[] $\msf{arr}_i := \emptyset$
  \end{itemize}
\item[] on \textbf{input} $\mtt{pay}(\$X)$ from the environment,
  \begin{itemize}
  \item[] if $\$X \le \msf{Contract}_\msf{Pay}.\msf{deposits}_i+\msf{paid}_i{-\msf{pay}_i-\msf{wd}_i}$:
    \begin{itemize}
    \item[] append $\$X$ to $\msf{arr}_i$
    \item[] $\msf{pay}_i \pe \$X$
    \end{itemize}
  \end{itemize}
\item[] on \textbf{input} $\mtt{withdraw}(\$X)$ from the environment,
  \begin{itemize}
  \item[] if $\$X \le \msf{Contract}_\msf{Pay}.\msf{deposits}_i+\msf{paid}_i{-\msf{pay}_i-\msf{wd}_i}$:
    \begin{itemize}
    \item[] $\msf{wd}_i \pe \$X$
    \end{itemize}
  \end{itemize}
\end{itemize}
  \end{framed}
  \end{minipage}
  \caption{Implementation of $\F{Pay}$ in the $\F{State}$-hybrid world (illustrated in Figure~\ref{fig:hybridworld}).}
  \label{fig:prot-chan}\label{fig:protpay}\label{fig:prot-pay}
\end{figure}


In the appendix, we prove the following theorem:
\begin{theorem}
The $\Pi_\msf{Pay}$ protocol realizes the $\F{Pay}$ functionality in the $\F{State}$-hybrid world.
\end{theorem}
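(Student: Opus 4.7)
The plan is to prove this by constructing a simulator $\S$ in the ideal world (with $\F{Pay}$) that mimics the joint view the environment would see in the $\F{State}$-hybrid world (where honest parties run $\Pi_\msf{Pay}$ on top of $\F{State}$ and $\msf{Contract}_\msf{Pay}$). Since $\F{State}$ is itself an ideal functionality and $U_\msf{Pay}$ is deterministic, the simulator $\S$ can internally emulate a copy of $\F{State}$ and $\msf{Contract}_\msf{Pay}$, relaying leakage and scheduling messages to $\A$ just as they would appear in the real (hybrid) execution. For each honest party, $\S$ translates $\F{Pay}$ leaks ($\mtt{pay}$, $\mtt{withdraw}$, $\mtt{deposit}$) into the corresponding simulated $\Pi_\msf{Pay}$ local actions --- specifically, appending $\$X$ to $\msf{arr}_i$ or adjusting $\msf{wd}_i$ and then feeding $(\msf{arr}_i,\msf{wd}_i)$ to the emulated $\F{State}$ at the next round boundary. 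Conversely, whenever $\A$ corrupts a party and supplies inputs to the emulated $\F{State}$, $\S$ decodes these into $\mtt{pay}/\mtt{withdraw}$ calls on $\F{Pay}$, using the clamping logic in $U_\msf{Pay}$ to determine how much credit is actually moved.

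The heart of the proof is a single invariant: after each round $r$ of the emulated $\F{State}$, the quantity $\msf{deposits}_i + \msf{cred}_i$ held inside the update function's state, minus any pending $\msf{arr}_i$ entries not yet delivered as $\mtt{receive}$, equals $\msf{bal}_i$ as tracked by $\F{Pay}$. I would establish this by induction on rounds. The base case is trivial (both sides are $0$). For the inductive step I would separately account for the three kinds of events: (i) an on-chain $\mtt{deposit}(\textbf{coins}(\$X))$ increments $\msf{deposits}_i$ (and eventually $\msf{bal}_i$) within $O(\Delta)$ --- matching the delayed $\mtt{aux\_input}$ propagation of $\F{State}$; (ii) a $\mtt{pay}(\$X)$ debits $\msf{bal}_i$ in $\F{Pay}$ immediately and symmetrically appears in $\msf{arr}_i$ in the local protocol, while the credit transfer $\msf{cred}_\L \pe \msf{pay}_\R - \msf{pay}_\L - \msf{wd}_\L$ (and dually for $\R$) performed by $U_\msf{Pay}$ exactly reproduces the credit of the counterparty's balance; (iii) a $\mtt{withdraw}$ causes the aux output $(\msf{wd}_\L,\msf{wd}_\R)$ to be relayed by $\msf{Contract}_\msf{Pay}$ within $O(\Delta)$, which matches the $\F{Pay}$ bound on coin disbursements.

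The main obstacle I anticipate is reconciling two mismatches between $\F{Pay}$'s interface and the round-batched interface of $\F{State}$. First, $\F{Pay}$ delivers payments individually (one $\mtt{receive}$ per $\mtt{pay}$) while $\F{State}$ bundles a round's worth of inputs into a single state update; the list $\msf{arr}_i$ carried through $U_\msf{Pay}$ together with the foreach-loop in the local protocol's handler is precisely what lets $\S$ recover per-payment granularity, and I would need to argue that reordering within a round does not cause observable differences, since $\F{Pay}$ already leaks each $\mtt{pay}$ to $\A$ and allows $\A$ to schedule them. Second, the clamping behavior of $U_\msf{Pay}$ (dropping array entries that exceed available credit, and zeroing $\msf{wd}_i$ if it overflows) must line up with the ``discard if $\msf{bal}_i < \$X$'' checks in $\F{Pay}$. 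The honest local protocol only enqueues amounts satisfying the same bound using its conservatively-tracked local view, so for honest parties the clamping is never triggered and the simulator has nothing extra to explain; for corrupt parties, $\S$ inspects the clamped result of $U_\msf{Pay}$ and issues the truncated $\mtt{pay}/\mtt{withdraw}$ amounts to $\F{Pay}$, which $\F{Pay}$ will accept identically.

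Finally, I would handle the explicit timing bounds by showing that (a) the $O(1)$ fast-path of $\F{Pay}$ when both parties are honest is met because $\F{State}$ itself broadcasts $\msf{state}$ within $O(1)$ in the all-honest case, giving immediate triggering of the $\mtt{receive}$ output; and (b) the $O(\Delta)$ bounds in the corrupted case follow from the identical $O(\Delta)$ worst-case guarantees of $\F{State}$. Having established the invariant and the timing correspondence under every scheduling choice of $\A$, the two transcripts are distributed identically (the construction is information-theoretic once $\F{State}$ is invoked as a subroutine --- the use of one-way functions is absorbed inside $\F{State}$ via the prior theorem), completing the indistinguishability argument.
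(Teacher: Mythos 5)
Your proposal follows essentially the same route as the paper's proof: a simulator that runs a sandboxed copy of $\Pi_\msf{Pay}$ (together with its emulated $\F{State}$ and $\msf{Contract}_\msf{Pay}$), relays $\F{Pay}$'s leakage into the sandbox and the sandbox's outputs back as delayed $\F{Pay}$ tasks, and argues an exact, deterministic, information-theoretic correspondence via the lockstep invariant between $\msf{deposits}_i+\msf{cred}_i$ and $\msf{bal}_i$, with the $\msf{arr}_i$ list recovering per-payment delivery order. Your explicit inductive invariant, the clamping analysis for corrupted inputs, and the $O(1)$ versus $O(\Delta)$ timing correspondence are simply more detailed renderings of what the paper's sketch asserts, so the two arguments coincide.
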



\section{Linked Payments from State Channels}
\label{sec:paymentchains}
\begin{figure*}
\subfigure[Real World (``Smart Contracts'' and\newline party-to-party communication)]{
\includegraphics[width=0.32\textwidth]{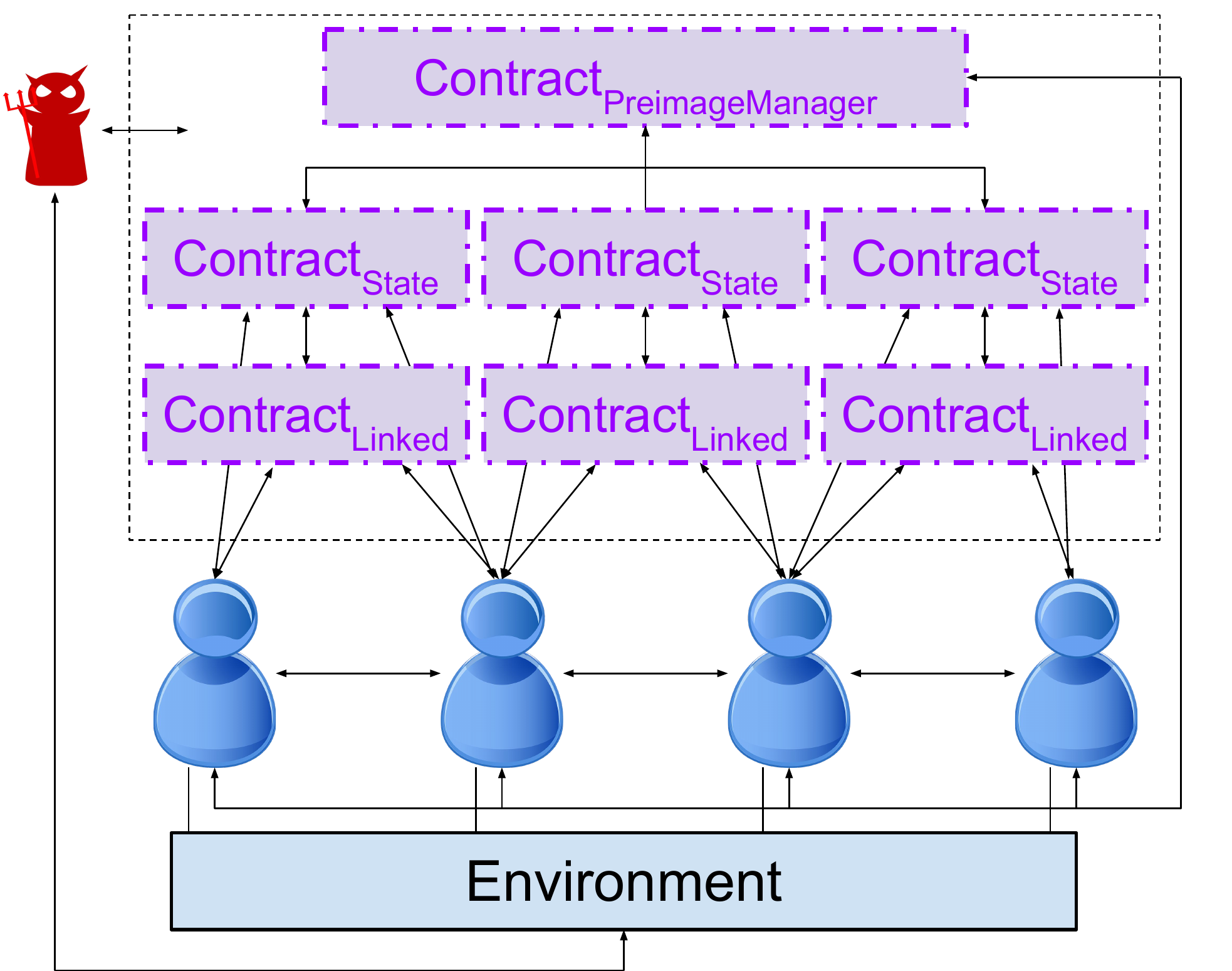}
\label{fig:realworld}
}
\subfigure[Hybrid World (Smart contracts, plus\newline the generic ``state channel'' primitive)]{
\label{fig:hybridworld}
\includegraphics[width=0.32\textwidth]{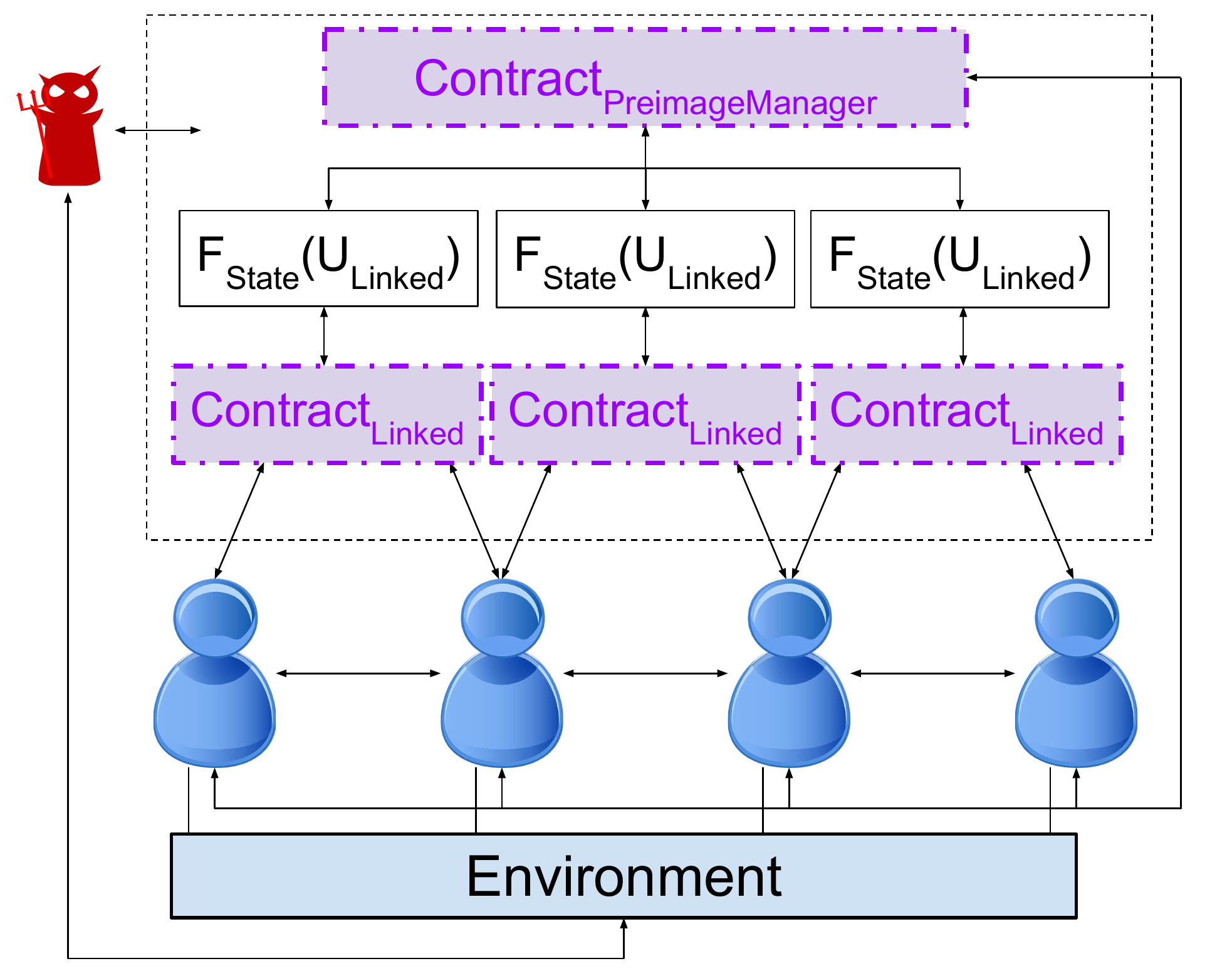}
}
\subfigure[Ideal World]{
\includegraphics[width=0.32\textwidth]{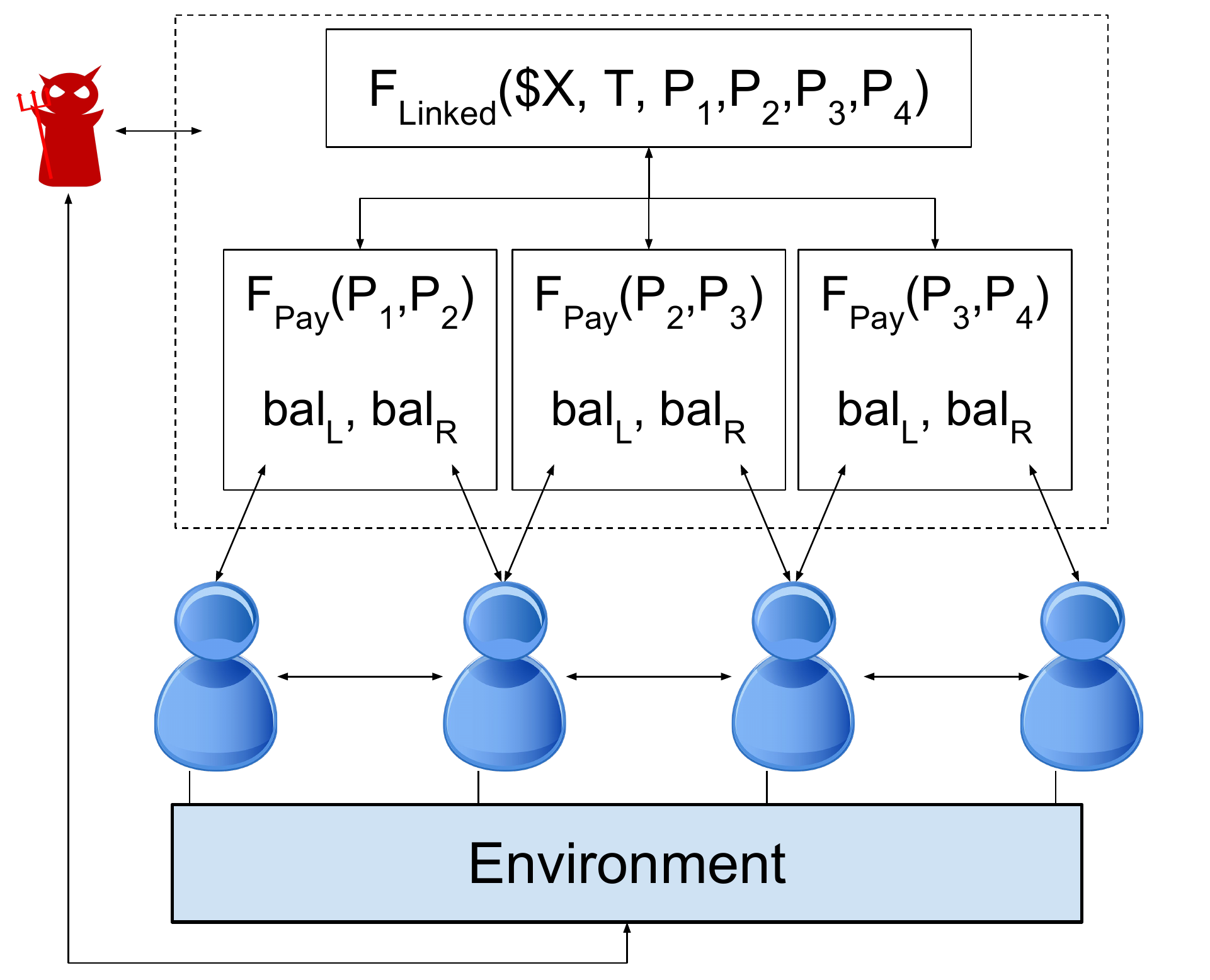}
\label{fig:idealworld}
}
\caption{Illustration of our main formalism: a construction of payment channels supporting ``atomic'' chaining (the ideal world, (c)), using a system of Ethereum smart contracts (the real world, (a)). Our construction is modular (b), using a general-purpose ``state channel'' functionality, $\F{State}$, which is of independent interest (Section~\ref{sec:statechannel}). 
%
}
\label{fig:worlds}
\end{figure*}

\begin{figure}[!ht]
\begin{framed}
  \begin{centering}
 {\bf Functionality $\Fchain(\$X, T, P_1, ... ,P_\ell)$ \\}
  \end{centering}
\begin{itemize}
\item[] initially, for each $i \in 1...{(\ell-1)}$, set $\msf{flag}_i := \mtt{init} \in \{\mtt{init}, \mtt{inflight}, \mtt{complete}, \mtt{cancel}\}$ \\

\item[] \textbf{on receiving} $(\mtt{open},i)$ from $\A$, if $\msf{flag}_i = \mtt{init}$, then
  \begin{itemize}
  \item[] if $\F{Pay}^i.\msf{bal}_\mtt{L} \ge \$X$ then:
    \begin{enumerate}
    \item[] $\F{Pay}^i.\msf{bal}_\mtt{L} \me \$X$
    \item[] set $\msf{flag}_i := \mtt{inflight}$
    \end{enumerate}
  \item[] otherwise, set $\msf{flag}_i := \mtt{cancel}$ \\
  \end{itemize}
  
\item[] \textbf{on receiving} $(\mtt{cancel},i)$ from $\A$, if at least one party is corrupt and $\msf{flag}_i \in \{ \mtt{init}, \mtt{inflight} \}$, 
  \begin{itemize}
  \item[] if $\msf{flag}_i = \mtt{inflight}$ then set $\F{Pay}^i.\msf{bal}_\mtt{L} \pe \$X$
  \item[] set $\msf{flag}_i := \mtt{cancel}$ \\
  \end{itemize}

\item[] \textbf{on receiving} $(\mtt{complete},i)$ from $\A$, if $\msf{flag}_i = \mtt{inflight}$, 
  \begin{itemize}
  \item[] $\F{Pay}^{i-1}.\msf{bal}_\mtt{R} \pe \$X$     
  \item[] set $\msf{flag}_i := \mtt{complete}$ \\
  \end{itemize}
  
\item[] \textbf{after time} $T + O(\ell + \Delta)$, or after $T + O(\ell)$ if all parties honest, raise an \mtt{Exception} if any of the following assertions fail:
  \begin{enumerate}
  \item for each $i \in 1...(\ell-1)$, $\msf{flag}_i$ must be in a terminal state, i.e., $\msf{flag}_i \in \{ \mtt{cancel}, \mtt{complete} \}$
  \item for each $i \in 1...(\ell-2)$, if $P_i$ is honest, it must not be the case that $(\msf{flag}_i, \msf{flag}_{i+1}) = (\mtt{cancel},\mtt{complete})$.
  \item if $P_1$ and $P_\ell$ are honest, then $(\msf{flag}_1,\msf{flag}_{\ell-1}) \in \{ (\mtt{complete},\mtt{complete}), (\mtt{cancel},\mtt{cancel}) \}$
  \end{enumerate}

\end{itemize}
\end{framed}
\caption{Definition of the chained-payment functionality}\label{fig:flinked}
\end{figure}

We wish to be able to route a payment from one party to another across a path of intermediary payment channels that connect them. The challenge is to ensure the collateral provided by intermediaries is returned to them within a bounded time.

\subsection{Modeling linked payment chains as an ideal functionality.}
The ideal world, which serves as our formal security definition, is illustrated in Figure~\ref{fig:flinked}(c). It essentially consists of multiple instances of the duplex channels $\F{Pay}$, as well one instance of a new linked payment functionality $\Fchain$.
Essentially, the $\Fchain$ functionality interacts with the individual $\F{Pay}$ instances, accessing their state to implement conditional payments and to ensure consistent behavior among them.

We first describe how $\Fchain$ models each conditional payment.
When a payment begins with $\mtt{open}$, $\Fchain$ reserves a portion of $P_\L$'s balance in each $\F{Pay}$ instance, advancing a status symbol $\msf{flag}$ from $\mtt{init}$ to \mtt{pending}. To reserve the balance, $\Fchain$ interacts with the individual $\F{Pay}$ functionalities directly, in a ``white box'' way, i.e. by directly manipulating the $\msf{bal}_{\{\L,\R\}}$ fields. If a channel on the path has insufficient balance, then the payment is canceled. From the \mtt{pending} state, the conditional payment must conclude (within bounded time) in one of two ways, either \mtt{cancel} in which case the balance is refunded to $P_\L$, or \mtt{complete} in which case it is paid to $P_\R$. These transitions are summarized in Figure~\ref{fig:chainstates}.

To guarantee that intermediaries do not lose money, we must ensure that if any an outgoing conditional payment completes, then the incoming payment also completes (for an honest party).
Consider a scenario where parties $P_1$ through $P_\ell$ have established $\ell-1$ payment channels, such that $\F{Pay}^i$ denotes the payment channel established between $P_i$ and $P_{i+1}$. It is easy to check that the desired properties described earlier (Section ~\ref{sec:background:linkedpayments}) are exhibited by the functionality $\Fchain$:

\noindent \textbf{(Liveness):} If all parties $P_1$ through $P_\ell$ are honest, and if sufficient balance is available in each payment channel, then the chained payment completes successfully after $O(\ell)$ rounds. More specifically, for each of channel $\F{Pay}^i$, the outgoing balance $\F{Pay}^i.\msf{bal}_\mtt{R}$ is increased by $\$x$ and each incoming balance $\F{Pay}^i.\msf{bal}_\mtt{L}$ is decreased by $\$x$.
Also note that if the sender and receiver, $P_1$ and $P_\ell$, are both honest
then the payment either completes or cancels atomically for both parties.\footnote{Note that no guarantees are provided to the sender and receiver if either misbehaves. The payment is voluntary, so the sender could simply choose not to make the payment in the first place. In future work we plan to provide a mechanism for ~\cite{FC:REFUND}.}
More precisely, after $O(\ell+\Delta)$ rounds, either the payment completes (the outgoing balance of $P_1$ is decremented by $\$X$ and the incoming balance of $P_\ell$ is incremented by $\$X$), or else the payment fails, and both parties balances remain unchanged.

\noindent \textbf{(No counterparty risk):} Even if some parties are corrupt, then the honest parties on the path, i.e. $P_{2}$ through $P_{\ell-1}$ should not lose any money. More specifically, for each party $P_i$, after a maximum of $O(\ell + \Delta)$ rounds, either the incoming balance $(\F{Pay}^{i-1}.\msf{bal}_\mtt{R})$ is incremented by $\$X$, or else the outgoing balance $(\F{Pay}^{i}.\msf{bal}_\mtt{L})$ is returned to its initial state.
First, notice that  $\msf{flag} := \mtt{cancel}$ can only occur if some parties are corrupted, or if the channel balance is insufficient. Furthermore, notice that assertion 2 ensures that honest parties do not lose money.
Finally, notice that the individual $\F{Pay}$ payment channels continue operating ``as normal'' even while the linked payment is in progress, i.e. parties can also send (unconditional) payments to each other in the meantime, as well as deposit and withdraw on-chain funds, up to the available amount.
\footnote{To generalize the functionality, we would use a generic composition operator to construct an ideal world where many instances of $\Fchain$ can exist simultaneously, and be brought into existence by parties on demand. To satisfy the composition theorem, we would need to ensure that multiple payment chains are prevented from interfering with each other, e.g. by replaying messages. We elide over these issues.}

\begin{figure}
  \includegraphics[width=\columnwidth]{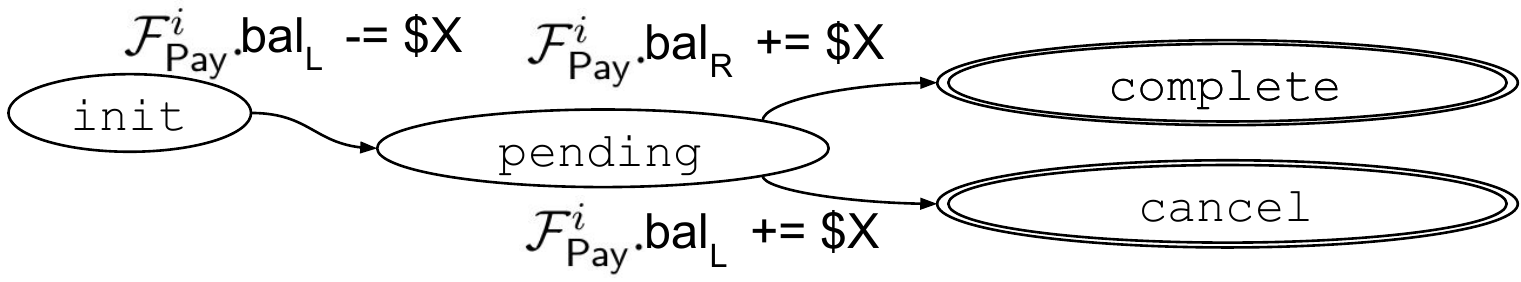}
  \caption{State transitions and side effects in $\Fchain$ (for each channel $\F{Pay}^i$. Within $O(\ell + \Delta)$ rounds, a terminal state is reached.}
  \label{fig:chainstates}
\end{figure}



\subsection{Instantiating linked payments}
As with our construction for $\F{Pay}$, our
construction $\Fchain$ consists of an update function $U_\msf{Linked}$ that
specializes $\F{State}$, as well as auxiliary
contracts and local behavior for each party (see appendix Figure~\ref{fig:prot-chain}). We
focus our discussion on the update function and
auxiliary contracts as shown in
Figure~\ref{fig:prot-chain-basic} .

The update function $U_{\msf{Linked}}$ is an outer layer around the $U_\msf{Pay}$ function (Figure~\ref{fig:prot-pay}), but extends \msf{state} to include support for a conditional payment, mirroring the status flag in the $\Fchain$ functionality. The left-hand party for each channel $P_\L$, creates a conditional payment by sending an $\mtt{open}(h)$ instruction to $\F{State}$, where $h$ is the hash of a (possibly unknown) secret.
Each conditional payment can be concluded in one of three ways: by a $\mtt{complete}$ instruction from $P_\L$, a $\mtt{cancel}$ message from $P_\R$, or through a $\mtt{dispute}$ case, which can occur only if one of the two parties fails, as we describe shortly.

To establish a chain of linked payments, the initial sender $P_1$ first creates a secret $x$, shares with the recipient $P_\ell$, and creates an outgoing conditional payment to $P_2$ using $h = \hash(x)$. Each subsequent party $P_i$ in turn, upon receiving the incoming conditional payment, establishes an outgoing conditional payment to $P_{i+1}$. Once the recipient $P_\ell$ receives the final conditional payment, it multicasts $x$ to every other party.

The key challenge is to ensure that if an honest party's outgoing conditional payment completes, then its incoming conditional payment must also complete. In the $\mtt{dispute}$ case, whether the conditional payment is canceled or refunded depends on the state of the global preimage manager, $\msf{Contract}_\msf{PM}$, which acts like a global condition: if the preimage manager contract receives $x$ before time $T_\msf{Expiry}$, then \textit{every} conditional payment that is disputed will complete; otherwise, \textit{every} disputed conditional payment will cancel. Therefore, if an honest party receives $x$ before time $T_\msf{Expiry}-\Delta$, it is safe to \mtt{complete} their outgoing conditional payment, since in the worst case they will be able submit $x$ to $\msf{Contract}_\msf{PM}$ and claim their incoming payment via \mtt{dispute}.

In the Appendix we give a security proof that $\Pichain$ realizes the ideal world $(\F{Linked},\F{Pay})$. Here we just remark on the three scenarios the protocol is designed to handle.
First, it is possible that parties receive
inconsistent values of $h$. But since each party
creates an outgoing conditional payment with $h'$
only after receiving an incoming conditional
payment with the same hash $h'$, their balance is
preserved regardless.
Second,
note that the ideal functionality permits for some
conditional payments to \mtt{complete} while
others \mtt{cancel}, but only in ways that do not
harm honest parties (only corrupted parties may lose money).
Finally, we note
that in the optimistic case, when all parties are
honest and thus all payment conditional payments
\mtt{complete}, the $\msf{Contract}_\msf{PM}$ is
never invoked at all.



\begin{figure}[!t]
  {\centering \bf  Protocol $\Pichain(\$X, T, P_1, ... P_\ell)$ \\}
        \vspace{2pt}
  \input{sections/prot_chain_contract}
  \caption{Smart contract for protocol $\Pichain$ that implements linked payments ($\Fchain,\F{Pay}$) in the $\F{State}$-hybrid world. See Appendix (Figure~\ref{fig:prot-chain}) for local behavior.}\label{fig:prot-chain-basic}
  \end{figure}



\section{Simulating Payment Channel Networks}
\label{sec:simulation}
In our Sprites construction we optimize the locktimes and collateral costs in payment channels. We hypothesize that Sprites will lead to better performance in a real system (especially if under attack) because more collateral will be unlocked and available to route payments.
To estimate the impact overall system performance, i.e., payment throughput, we developed a simulation framework to model Lightning and Sprites payment channels in various network configurations.
We generate synthetic network topologies based on two models, scale-free and small-world, similar to how Prihodko et al~\cite{flare} evaluated the Flare routing scheme; in fact,
in the Appendix
we present a reproduction of their experiment to validate our framework.
Our simulation also relates to that of Moreno-Sanchez et al.~\cite{pathshuffle,silentwhispers}, although they use data from the Ripple network rather than synthetic topologies. They also feature an alternative routing protocol which we leave for future work.

\paragraph{Topology formation}
It remains to be seen what payment channel nework topologies will emerge in practice.
In our simulation, we consider two commonly used topology models, scale-free networks and small-world networks. A small-world network is a graph where the average path length between nodes is short, at most $O(\log N)$. A scale-free network is one where the degree distribution is given by the power law~\cite{scalefree}. Scale-free networks are also small world, but in particular feature high-degree hubs.
The Barab\'asi-Albert (BA) model~\cite{bamodel} is an algorithm for generating a scale-free network, while the Watts-Strogatz (WS) model~\cite{wsmodel} is an algorithm for generating small world networks that are not scale free.
Roughly speaking, BA models a more centralized network because of the influence of large hubs.
%
For our experiment we generate random undirected graphs of 2,000 nodes using the BA and WS algorithms.
We also
assign to each node a network latency; based on estimates from measurements of the Bitcoin network conducted by Neudecker et al.~\cite{bitcointiming},%
\footnote{The latency captured by Bitcoin network measurements from Neudecker et al.~\cite{bitcointiming} includes internet latency as well as Bitcoin-specific transaction relaying behavior.}
we sample so that 92.5\% of nodes have 100ms latency, 4.9\% have 1 second latency, and 2.6\% have 10 second latency.

\paragraph{Generating payment requests}
To generate distributions of payments, we make use of an anonymized dataset of credit card transactions provided to us by a bank. The dataset consists of four million transactions, made by approximately 50,000 unique cardholders (identified only by random labels), over a six month period (from Dec 2016 to May 2017).
We assign attributes to each node by independently sampling from distributions as follows.
We label each node as either a ``consumer'' with probability $1/3$ or ``merchant'' with probability $2/3$, as there are twice as many merchants as consumers in our anonymized dataset.
We assign an initial payment channel capacity to each
edge by sampling from a bi-modal distribution, either ``High'' (\$800)
with probability 0.2 or ``Low'' (\$50) with probability 0.8.
We assign each consumer node a transaction value mean and variance by sampling from the anonymized dataset.
Finally, we assign a ``spend frequency'' (resp. ``receive
frequency'') to each consumer node (resp. merchant node) by sampling
from the anonymized dataset.
To create a payment request, we sample a consumer node at random (weighted by its spend
frequency) as the sender, and a merchant node (weighted by receive
frequency) as the recipient. For the transaction amount, we sample from a normal distribution with using the mean and variance associated with the sender node.

\begin{figure}[t!]
  \centering
  \subfigure[Scale Free (BA)]{
    \includegraphics[width=0.5\columnwidth]{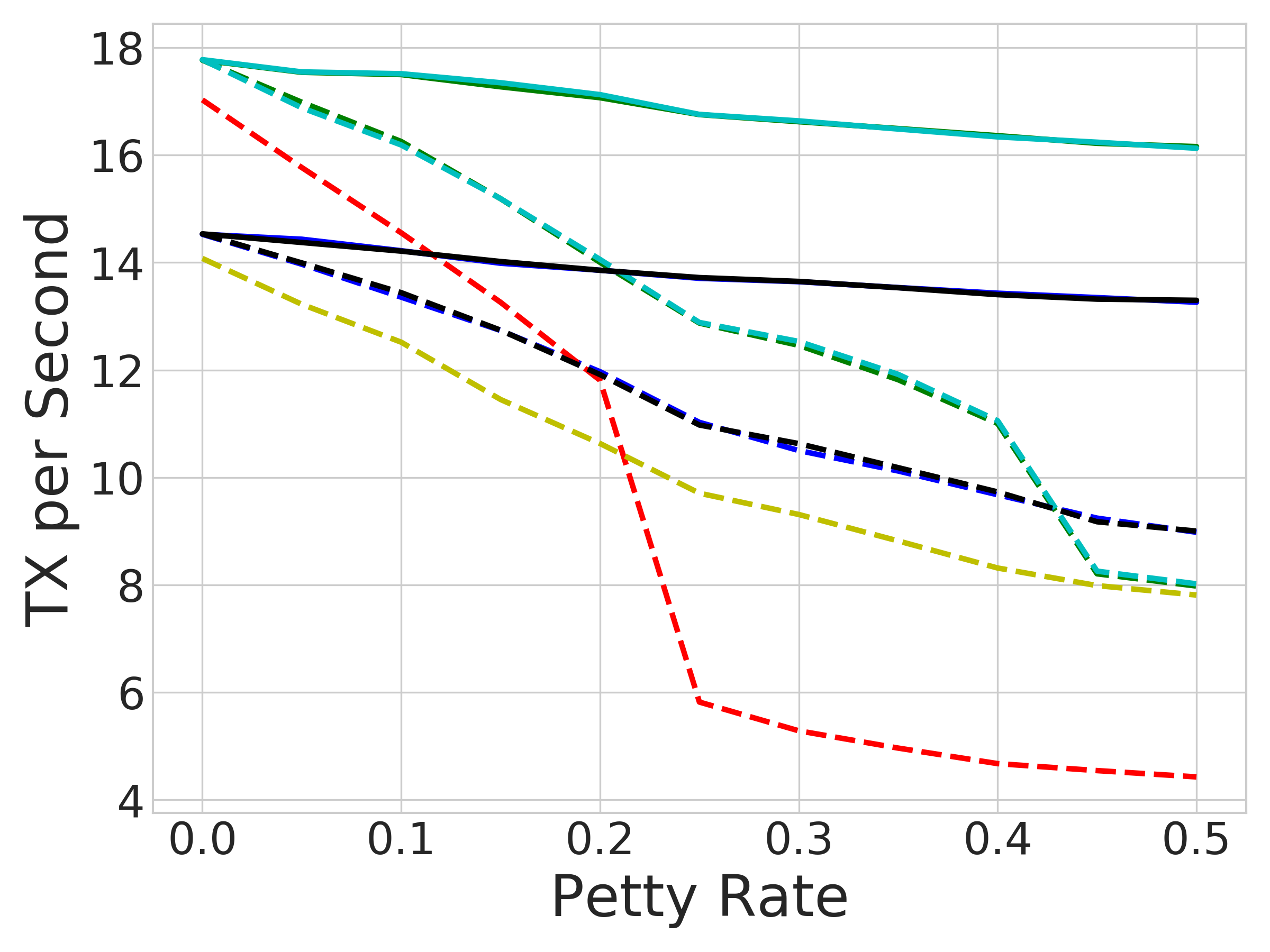}
  }%
  \subfigure[Small World (WS)]{
    \includegraphics[width=0.5\columnwidth]{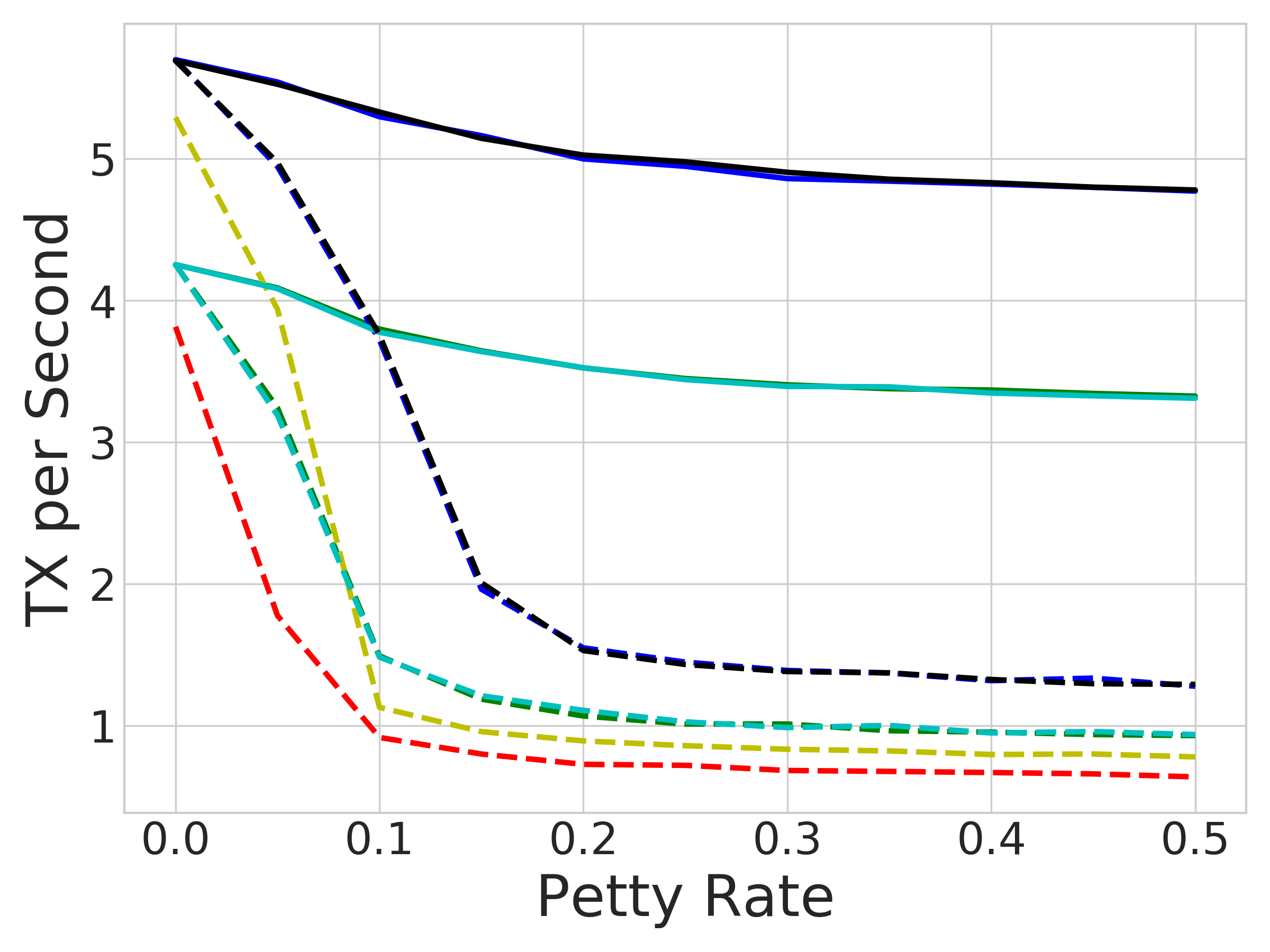}
  }
  \caption{Effect of petty attacks on transaction throughput (@98\% success rate), under varying configurations of payment channel networks.}
  \label{fig:petty-goodput}
\end{figure}
\begin{figure}[t!]
    \subfigure[Scale Free (BA)]{
      \includegraphics[width=0.5\columnwidth]{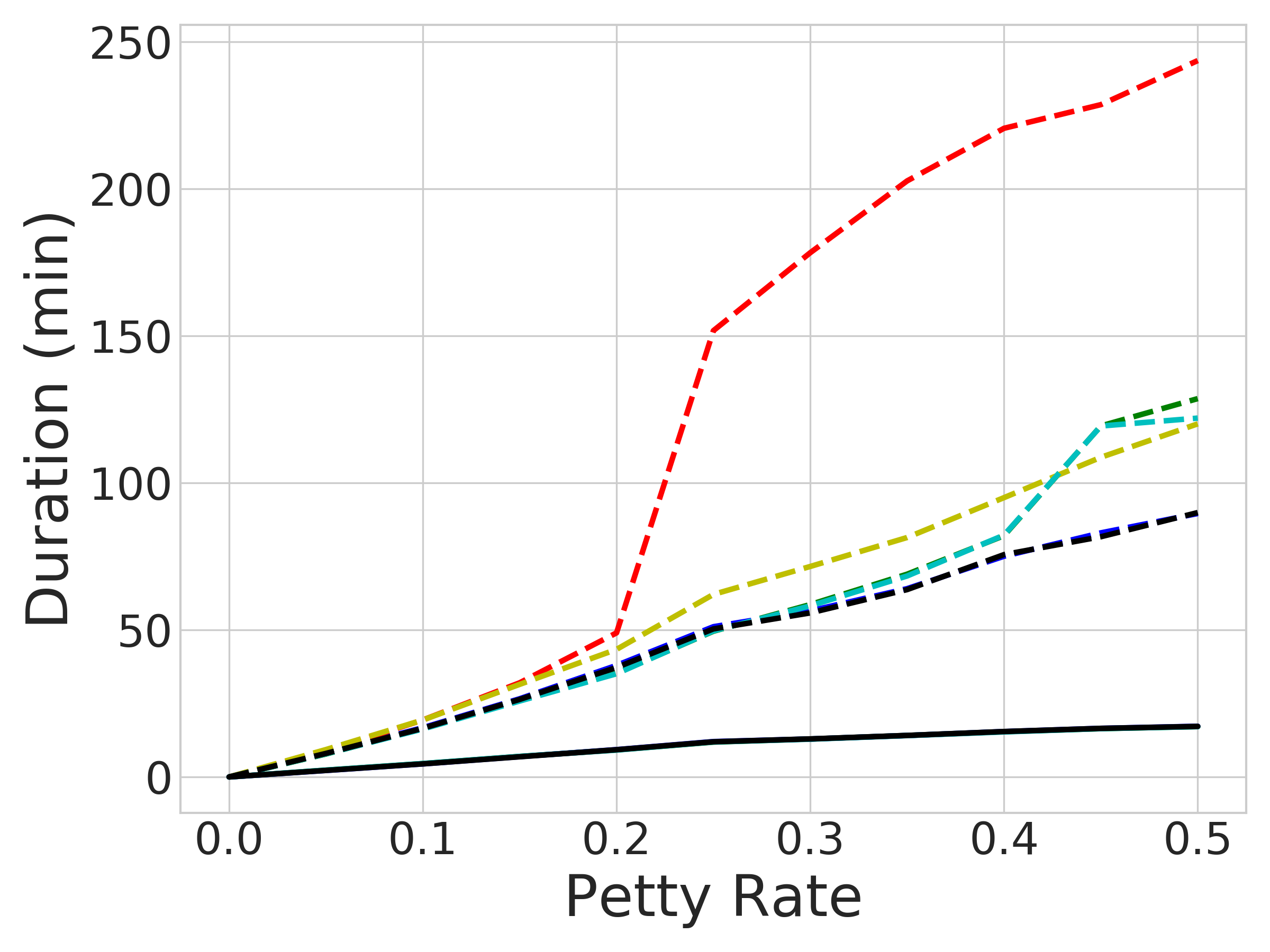}
    }%
    \subfigure[Small World (WS)]{
      \includegraphics[width=0.5\columnwidth]{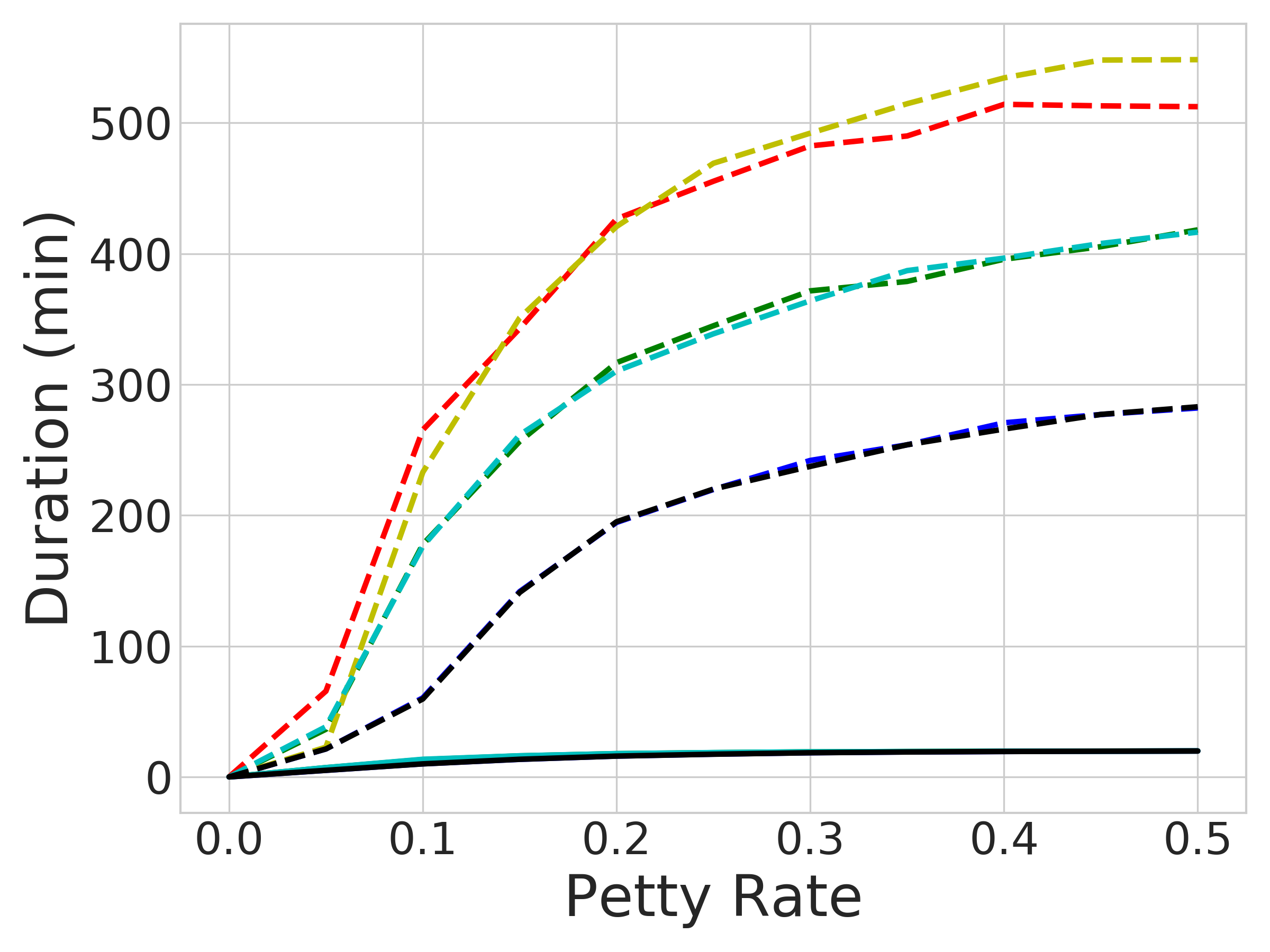}
    }
    \centering
\includegraphics[width=\columnwidth]{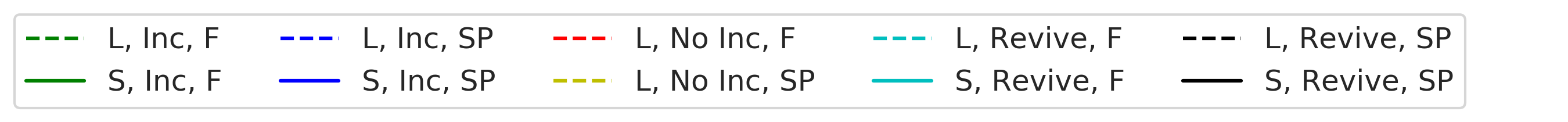}
    \caption{Effect of petty attacks on transaction duration (@98\% success rate), under varying configurations of payment channel networks.}
    \label{fig:petty-duration}
\end{figure}

\paragraph{Route finding}
For each payment request, we attempt to find a route using one of two algorithms: 1) Flare (F)~\cite{flare} is a decentralized DHT-like route finding protocol, that requires only local information and interaction between nodes.
See the appendix for more background on Flare.
2) Shortest Path (SP) models an idealized central party that uses global information to determine the overall shortest path. In future work we plan to evaluate landmark routing~\cite{silentwhispers} as well, which we hypothesize would be a middle ground between these two. If a suitable path cannot be found, we count the payment attempt as a failure.

\paragraph{Evolution}
Our simulation keeps track of the balance and pending conditions in each payment channel, updating them tick by tick (each tick represents 1 second of real time).
When a route is found, a new conditional payment is opened on each channel immediately (that is, we model the ``open'' command as completing instantaneously). Each payment channel supports potentially many concurrent in-flight payments, as long as the channel has sufficient balance. For each in-flight conditional payment, the ``complete'' commands propagate one hop at a time, where the time to complete each hop is based on the node's network latency. 

\paragraph{Rebalancing}
Since each payment originates at a consumer (a source), and terminates at a merchant (a sink),  the payment channels become unbalanced over time. We model on-chain rebalancing behaviour as follows:
at each tick (every 10 seconds) every  node checks if it has low balance channels (less than \$20,
the maximum size of one payment) and if so creates an on-chain transaction to replenish their funds evenly across their channels. In the incremental deposits (No Inc.) setting, only the amount to rebalance is locked for an $\Delta$ period, while otherwise (No Inc.) the entire channel is paused for $\Delta$. We also model Revive~\cite{revive} off-chain rebalancing (Revive). Every three ticks (30 seconds), we adjust the channel balances to minimize the difference between incoming and outgoing balances along each, effectively unwinding any credit cycles in the network. We model off-chain rebalancing as instantaneous though in reality it would incur some delay.

\paragraph{Modeling ``petty'' attacks}
The main benefit of the Sprites technique is to provide a better worst-case delay. We therefore evaluate an attack scenario where the attacker's goal is to reduce transaction throughput. We let the attacker control a varying fraction $0 \le \beta \le 0.5$ of the network. The attacker follows a ``petty'' strategy, delaying state-channel messages until the last possible moment.

\paragraph{Quantifying throughput}
We are especially interested in the maximum attainable throughput enabled by payment channel networks. To establish a normalized baseline across configurations, we first identify the maximum throughput such that at least 98\% of transaction attempts succeed; that is, we increase the payment generation rate until the simulation reaches a steady state where 2\% of payment attempts fail.

\paragraph{Results}
In Figure~\ref{fig:petty-goodput} we show the results of running our simulation for several different configurations. We 
We varied several parameters, namely incremental deposits, and constant locktime (S) vs lightning (L), all at varying levels of petty attacker. 
We then report the rate of successful transactions per second. (at 98\% success).
The $\msf{PettyRate}=0$ case is our baseline, from which we determined the request rate at which 98\% of payment attempts succeed.

In every scenario, incremental deposits improves the attainable throughput by over 3.2\% in the BA network, and over 11.5\% in the WS network. We also find that Revive appears to have minimal effect, presumably because credit cycles do not often form.

We next consider the extent to which a malicious attack could disrupt the throughput of the payment channel network.
We also see that in every case, the use of constant locktimes effectively mitigates the harm caused by petty attackers. When half the network is petty $\msf{PettyRate}=0.5$,  the Sprites model sustains 88.7\% (for BA) of its baseline throughput and 76.8\% (for WS), whereas Lightning drops to at best 60.9\% for BA and 22.7\% for WS.
We can explain this by the average duration of payment requests, as shown in Figure~\ref{fig:petty-duration}. Under increasing petty attacks, many transactions are routed through paths that include a petty attacker, leading to longer transaction times and less available collateral.
Finally, we note that in all cases, the relative improvement of Sprites vs Lightning is more pronounced in the more decentralized small world (WS) model rather than the more centralized scale free model (BA).

\section{Discussion and Conclusion}
Cryptocurrencies face several ongoing challenges: they must scale up to accommodate increasing user demand, and they must compete with centralized alternatives.
%
%
Our construction of Sprites embodies two novel insights that improve the achievable throughput and worst-case collateral costs compared to Lightning~\cite{lightning}, the current state-of-the-art design. 
Furthermore, we show through simulation experiment that the improvements in Sprites have a greater impact for decentralized topologies and routing algorithms. Our work therefore directly supports a more decentralized payment channel network.

We now outline several further questions for future work.

\paragraph{Feasibility of constant locktimes in Bitcoin}
Our constant locktimes construction relies on a global contract mechanism, which is easily expressed in Ethereum, but cannot (we conjecture) be emulated in Bitcoin without some modification to its scripting system. Are there minimal modifications to Bitcoin script that would enable constant locktimes?

\paragraph{Privacy}
As we have focused on collateral costs as our key performance objective, our constructions and security definitions do not aim to ensure transaction privacy. Our work is therefore complementary to efforts that focus primarily on privacy~\cite{journals/iacr/GreenM16,heilman2016tumblebit,privatepaymentchannels,instantpoker}. We believe our state channel abstraction can serve as a convenient building block for this important future work.

\paragraph{Concurrent Conditional Transfers}
Concurrency in payment channels has been explicitly studied by Malavolta et al.~\cite{privatepaymentchannels}.
For simplicity, our functionality model $\Fchain$ expresses only a single conditional transfer; it would be straightforward to extend this to multiple concurrent payments.
We have included this feature in our proof of concept implementation in Ethereum,
\iftr
found in the Appendix.
\else
included in our full online version~\cite{fullonline}.
\fi

\paragraph{Supporting fees}

Participants who act as intermediaries in a payment path contribute their resources to provide a useful service to the sender and recipient. The intermediaries' collateral is tied up for the duration of the payment, but the sender and recipient would not be able to complete their payment otherwise. Therefore the sender may provide a fee along with the payment, which can be claimed by each intermediary upon completion of the payment. To achieve this, each conditional payment along the path should include a slightly less amount than the last; the difference can be pocketed by the intermediary upon completion. The following example provides a $\$1$ fee to each intermediary, $P_2$ and $P_3$.

{\centering
\vspace{-15pt}
$$
P_1 \xrightarrow[\hsquad {\msf{PM}[h,T_\msf{Expiry}]}\hsquad]{\$X + 2\textcent}
P_2 \xrightarrow[\hsquad {\msf{PM}[h,T_\msf{Expiry}]}\hsquad]{\$X + 1\textcent}
P_3 \xrightarrow[\hsquad {\msf{PM}[h,T_\msf{Expiry}]}\hsquad]{\$X} P_4
$$
}

\paragraph{Fair Exchange of Invoices}
McCorry et al proposed that the widely used Payment Protocol standard, BIP70, should be revised to include digitally signed invoices~\cite{FC:REFUND} from the merchant (i.e. the payment acknowledgement message).
We highlight that payment protocols such as BIP70 can be supported in Sprites and that it is feasible to fairly exchange a merchant's invoice for the customer's payment using the conditional transfers. The merchant would sign an invoice that includes the hash $h = \hash(x)$, and send this invoice to the customer. The customer creates a conditional transfer such that the merchant can receive these coins if the preimage $x$  is revealed. Finally, the merchant reveal $x$ to complete the Sprites payment.



\bibliographystyle{plain}
\emergencystretch 1.5em
\bibliography{bibstatechannel}

\appendix

\iftr



\subsection{Ideal Functionalities and Simulation Based Security}
Our formalism for Sprites is founded on the simulation-based security framework (in particular Universal Composability (UC)~\cite{FOCS:Canetti01}) which is a general purpose framework for modelling and constructing secure protocols. We now give 

\paragraph{Notes on the Blockchain Model}
We use the typical idealized model of a Bitcoin-like blockchain~\cite{hawk,CCS:KB14,CCS:KMB15,CCS:KMB15} as described below. 
For our purposes, a blockchain functions as a shared public database. Any party can write to the blockchain by submitting a ``transaction,'' which propagates throughout the network and is eventually committed into a consistent ordered log. Every party can view all the transactions committed on the blockchain; however, the views are only approximately synchronized. If one party's view comprises the sequence $\msf{txs}_1$ and another party's view comprises the sequence $\msf{txs}_2$, then it must be that $\msf{txs}_1$ is a prefix of $\msf{txs}_2$ or vice-versa. Furthermore, if any party's view includes a transaction $\msf{tx}$, then every party's view will also include $\msf{tx}$ after a maximum time bound. To simplify matters, we consider a single time bound, $\Delta$, which bounds the maximum delay ``round-trip'' time for a blockchain transaction: if some party submits a transaction $\msf{tx}$ at time $T$, then every party sees $\msf{tx}$ confirmed by time $T+\Delta$.

\paragraph{Smart contract programming conventions}
We make use of smart contract programming conventions in our construction, inspired by those implemented in Ethereum. Smart contracts are processes running in the blockchain database that accept input via user-submitted transactions. Smart contracts can be trusted to execute correctly, but do not provide any inherent privacy;  the adversary also has the opportunity to reorder and front-run user-submitted inputs.

We make use of several conventions based on features typically found in smart contract programs. Smart contracts have access to a clock (i.e. a ``block number'') which is approximately synchronized (i.e., to within $\Delta$) of the honest parties. We also assume that the smart contract execution environment provides a built-in notion of \textbf{coins}, which can be transferred (conserving total balance) between contracts and parties.
Contracts can also emit events as a way of notifying parties.
A party receives the event when the transaction triggering that event is confirmed by multiple blocks in the blockchain.

\paragraph{Universal Composability}
Universal Composability~\cite{FOCS:Canetti01} is formally defined in an execution model involving a system of interactive Turing machines (ITMs). ITMs are defined in a reactive style, by describing how to behave upon receiving a message; the resulting behavior includes modifying a local state, and sending a message to another ITM process.
The UC execution model involves several kinds of processes: an environment, $\Z$, which represents the ``external world'' and chooses the inputs given to each party and observes the outputs; parties that follow a given protocol $\Pi$, and an Byzantine adversary $\A$ that controls corrupted parties. The model also includes functionalities, $\F{}$, which act like idealized trusted third parties. A functionality serves as the target specification; the ``ideal world'' contains a functionality that exhibits all the intended properties of the protocol. A functionality in the ``real world'' is also used to represent network primitives and setup assumptions.
A proof in this framework takes the form of a simulator, which translates every attacker $\A$ in the real world into a simulated attacker $\S_\A$ in the ideal world, such that the two worlds are indistinguishable to the environment; in other words, the real world is just as good as the ideal world.
We denote by $\msf{execReal}(\Z, \Pi, \A)$ the output of $\Z$ following its interaction with $\A$ and the honest parties in an execution of $\Pi$ in the real world, and we denote by $\msf{execIdeal}(\Z, \F{}, \S_\A)$ the output of $\Z$ following an execution of $\F{}$ with $\S_\A$ and the honest parties in the ideal world.
Thus, we say that protocol $\Pi$ in the real world realizes functionality $\F{}$ in the ideal world if the distributions $\msf{execReal}(\Z, \Pi, \A)$ and $\msf{execIdeal}(\Z, \F{}, \S_\A)$ are indistinguishable.
 
The simulation based security framework supports modular composition: we can build a protocol that emulates an intermediate functionality $\F{Hybrid}$ in the real world, and then build a high-level protocol that makes use of $\F{Hybrid}$ to realize the target functionality $\F{}$. The composition theorem guarantees that we can make this substitution.

\paragraph{SIDs} In UC, each functionality is associated with a unique string, called the session ID (SID). The SID is essential for the composition theorem, as it ensures that concurrent instances of protocols are kept separate from each other. The practical significance of the session ID is that it is implicitly used as a tag for signatures and hashes to ensure that messages from one protocol instance cannot be replayed in another.
To reduce clutter, we elide the handling of SIDs from our presentation.

\paragraph{Smart contracts and functionalities}
We define experiments with multiple ideal functionalities as well as ``contract'' processes, which represent programs running on the blockchain network.
Our ideal functionalities are round-based, refer to~\cite{TCC:KMTZ13,journals/joc/GarayMPY11,hawk} on how to modify the UC framework to support a synchronous network model with a computation that proceeds in rounds.
We treat the contracts as ideal functionalities too, which are available to protocols in our ``real world'' (i.e., the starting assumption for our work is that we have access to a blockchain primitive) following prior works using formal framework to capture the blokchain model~\cite{hawk,C:BK14,CCS:KMB15,EC:GKL14,EC:PSs17}.
The notion of multiple functionalities is compatible with UC --- that is, the functionalities can be considered as a single combined
functionality. For example, the inputs provided to two ideal functionalities ${\cal F}_1,{\cal F}_2$ at the start of the $k^\text{th}$ round will determine their joint output at the end of the $k^\text{th}$ round.

\paragraph{Delayed tasks}
Frequently in our ideal functionalities, we use the notation ``within \{R\} rounds: \{ \emph{Task} \}''. This is intended to guarantee that the pseudocode described by \emph{Task} is executed within a bounded time, but the exact time when it is executed is under the control of the adversary.
This mechanism is compatible with the traditional UC paradigm; we can imagine implementing a ``task queue'' mechanism within the functionality.
%

\paragraph{Exceptions in Ideal Functionalities}
To simplify our ideal functionality $\Fchain$, we allow the functionality to raise an exception. Raising an exception immediately sends an \mtt{Exception} message to the environment. Since in the real world there is no such mechanism for raising an exception,
\footnote{Assertions in smart contracts, such as Figure~\ref{fig:prot:statechannel} cause the activation to be discarded, without notifying the environment.}
this would clearly allow the environment to distinguish between the real and ideal worlds. Therefore in our security proof we have the obligation of showing that the simulator we construct never triggers an exception.

\else
\fi

\subsection{Simulation-based Security Proof for Payment Channels}
We now explain how to prove that the protocol $\Pi_{\msf{Pay}}$ realizes the ideal payment channel functionality $\F{Pay}$ in the $\F{State}$-hybrid world.
In the hybrid world, parties run the local protocol, exchanging point-to-point messages as well as interacting with the $\F{State}(U_\msf{Linked})$ functionality and the $\msf{Contract}_{\msf{Linked}}$ smart contract. In the ideal world, the parties simply communicate with $\F{Pay}$.
The security proof consists of a simulator $\S$ that translates every behavior in the hybrid world (Figure~\ref{fig:pictproof} Left) into an indistinguishable behavior in the ideal world (Figure~\ref{fig:pictproof} Right).

The simulator we construct is deterministic, and ensures that the hybrid world and ideal world are exactly identical in the view of the environment $\Z$. In Figure~\ref{fig:pictproof} we illustrate the case where party $P_\msf{L}$ is corrupted. We use color coding to denote the different types of messages. The simulator runs a copy of the hybrid world ``in its head,'' which it maintains in exact correspondence. Interactions between $\Z$ and corrupted $P_\L$ (red) in the hybrid world take the form of instructions to the dummy adversary to pass through to the hybrid world functionality $\F{State}$. Inputs to the honest party $P_\R$ correspond to inputs accepted by the ideal functionality, $\F{Pay}$, since these are passed through by the ideal protocol. Since $\F{Pay}$ leaks these immediately to the simulator, the simulator passes these on to the instance of $\Pi_{\msf{Pay}}$ it runs.

Per Cannetti~\cite{FOCS:Canetti01}, it suffices to construct a simulator $\S$ for the dummy adversary (i.e., the hybrid world adversary that simply follows instructions from the environment).
Since the model does not provide any secrecy, and since the $\F{State}$-hybrid world hides any cryptography, the simulation is information-theoretic and deterministic. The simulator runs a local sandboxed execution of $\Pi_\msf{Pay}$, which it keeps in perfect correspondence with the state of $\F{Pay}$. When the environment asks $\A$ to command corrupted parties to interact with and observe $\F{State}$, the simulator $\S$ routes these requests to its sandboxed $\Pi_\msf{Pay}$. We can show that the sandboxed execution of $\Pi_\msf{Pay}$ maintained by $\S$ is identical to $\Pi_\msf{Pay}$ in the hybrid world.

\begin{theorem}
The $\Pi_\msf{Pay}$ protocol in the $\F{State}$-hybrid world realizes the $\F{Pay}$ ideal functionality.
\end{theorem}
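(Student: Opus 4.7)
The plan is to construct a deterministic simulator $\S$ that maintains, internally, a sandboxed copy of $\Pi_\msf{Pay}$ running in the $\F{State}$-hybrid world (including a local $\F{State}(U_\msf{Pay})$ and $\msf{Contract}_\msf{Pay}$), kept in lock-step with $\F{Pay}$. Following the dummy-adversary convention, $\S$ intercepts every instruction the environment issues to $\A$ and routes it through the sandbox; conversely, every leak or output produced by $\F{Pay}$ is translated into the corresponding messages the environment would observe in the hybrid world. Since there is no secrecy to preserve and all cryptography is already hidden inside $\F{State}$, the simulation is information-theoretic and deterministic, and the argument reduces to maintaining a structural invariant.

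The central invariant I would maintain is that, after processing each activation, the aggregated view held by $\F{Pay}$ matches the view presented to the environment by the sandboxed hybrid world. Concretely, for each $i\in\{\L,\R\}$, $\F{Pay}.\msf{bal}_i$ equals $\msf{deposits}_i+\msf{cred}_i$ in the committed state of the sandbox, corrected for payments and withdrawals that have been debited from the sender's local balance in $\Pi_\msf{Pay}$ but not yet absorbed by the $U_\msf{Pay}$ update, and corrected for pending $\textbf{coins}$ disbursements not yet delivered. I would then case-split on the five kinds of activations (contract $\mtt{deposit}$, environment $\mtt{pay}$, environment $\mtt{withdraw}$, delivery of a new $\F{State}$ round, and adversarial scheduling of delayed tasks) and verify the invariant is preserved in each.

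The main obstacle is reconciling the \emph{granularity} and \emph{timing} of payment delivery. In the ideal world $\F{Pay}$ leaks and delivers each $\mtt{pay}(\$X)$ as an independent event with its own deadline ($O(1)$ when both parties are honest, $O(\Delta)$ otherwise), whereas in the hybrid world multiple $\mtt{pay}$ calls that arrive during a single $\F{State}$ round are bundled into one vector $\msf{arr}_i$ and delivered together when the round output arrives. The simulator must schedule the per-payment $\mtt{receive}$ deliveries from $\F{Pay}$ so that their order and arrival times exactly mirror the ``foreach $e$ in $\msf{new}_i$'' loop in the honest recipient's protocol. Because $\F{State}$ itself already commits to the same asymptotic bounds ($O(1)$ optimistic, $O(\Delta)$ otherwise) for round delivery, there is enough slack to fit all of a round's per-payment deliveries into the window that $\F{Pay}$ allows, so $\S$ never needs to stall artificially.

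A second subtlety is the \emph{clamping} behavior of $U_\msf{Pay}$. When a corrupted party submits an $\msf{arr}_i$ whose prefix sums exceed $\msf{deposits}_i+\msf{cred}_i$, or a $\msf{wd}_i$ exceeding the residual balance, $U_\msf{Pay}$ silently drops the offending entries, whereas $\F{Pay}$ simply discards oversized $\mtt{pay}$/$\mtt{withdraw}$ calls. The simulator must, on behalf of the corrupted party, forward to $\F{Pay}$ exactly those $\mtt{pay}$ and $\mtt{withdraw}$ calls that survive the clamp in its sandbox, in the same greedy order. Combined with the fact that $\mtt{aux\_output}$ disbursements from $\msf{Contract}_\msf{Pay}$ carry the same clamped withdrawal amounts within the same $O(\Delta)$ window as the corresponding $\textbf{coins}$ outputs of $\F{Pay}$, the invariant extends to the on-chain side effects. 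Indistinguishability of $\msf{execReal}$ and $\msf{execIdeal}$ then follows by induction on the number of activations, using the fact that both worlds are deterministic once the environment's and adversary's inputs are fixed.
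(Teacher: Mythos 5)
Your proposal matches the paper's proof in both strategy and key observations: a deterministic, information-theoretic simulator running a sandboxed copy of $\Pi_\msf{Pay}$ kept in lockstep with $\F{Pay}$, with the crux being the balance correspondence ($\msf{bal}_i$ versus $\msf{deposits}_i+\msf{cred}_i$) and the scheduling of per-payment $\mtt{receive}$ deliveries to mirror the order produced by the $U_\msf{Pay}$ while-loop within the $O(1)$/$O(\Delta)$ windows. Your additional attention to the clamping of adversarial inputs is a detail the paper leaves implicit, but it does not change the argument.
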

\begin{proof} (Sketch) Let us assume that $P_i$ is corrupt and $P_{\neg i}$ is honest. The ideal world simulator $\S$ for the dummy hybrid world adversary runs a sandboxed execution of $\Pi_\msf{Pay}$ through which it relays instructions as described below:

\textit{Inputs from honest parties.} When the simulator $\S$ receives a message of the form $(\mtt{pay},P_i,\$X)$ from $\F{Pay}$, it provides an input $\mtt{pay}(\$X)$ to $P_i$ in the sandboxed execution of $\Pi_\msf{Pay}$; when $\S$ receives $(\mtt{withdraw},P_i,\$X)$, it inputs $\mtt{withdraw}(\$X)$.

  \textit{Contract inputs.} If $\A$ instructs $P_i$ to send $\mtt{deposit}(\$X)$ to $\msf{Contract}_\msf{Pay}$, $\S$ simulates $P_i$ to send $\mtt{deposit}(\$X)$ to the sandboxed $\msf{Contract}_\msf{Pay}$, and translates the $O(\Delta)$-delayed task of $\F{State}$ to an $O(\Delta)$-delayed task for $\F{Pay}$.

  \textit{Message delivery in $\F{State}$.} When the environment asks to execute a delayed task in $\F{State}$ (i.e., to advance $\auxin$ or to apply a state update), $\S$ routes this request to $\Pi_\msf{Pay}$. If the sandboxed $\F{State}$ provides output $\msf{state}$ to a corrupted party, pass $\msf{state}$ to the environment.

  \textit{Outputs to honest parties.} If honest party $P_{\neg i}$ in the sandboxed $\F{State}$ provides output $(\msf{receive},\$X)$, then deliver the delayed task in $\F{Pay}$ that sends the same output to $P_{\neg i}$ in the ideal world.

\begin{figure*}
  \centering
  \includegraphics[width=5.5in]{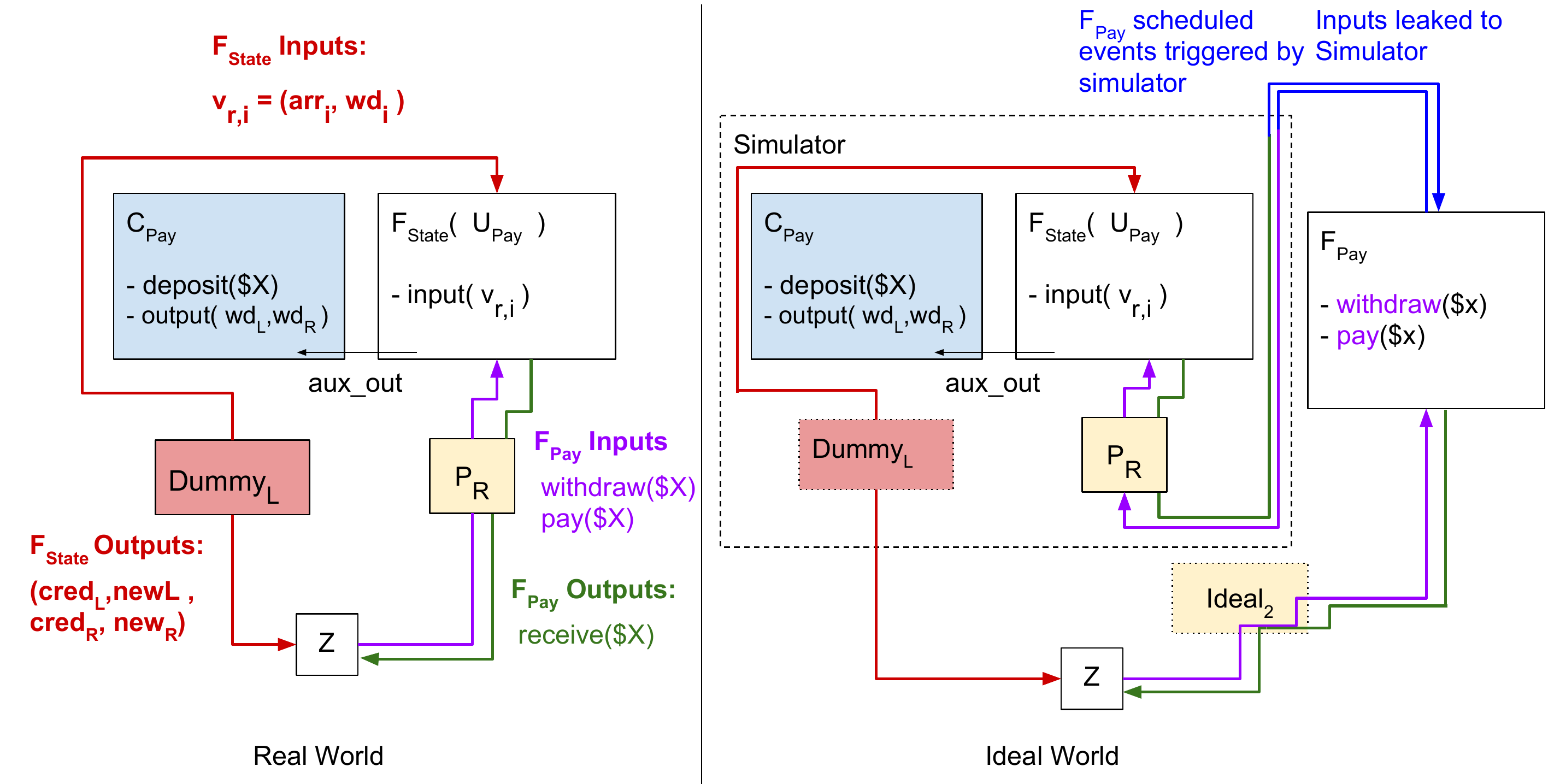}
  \caption{Illustration of the simulator-based proof for $\Pi_\msf{Pay}$ for corrupted $P_\L$ (the corrupted $P_\R$ case is symmetric.}
  \label{fig:pictproof}
\end{figure*}

  \smallskip
  The view that $\S$ constructs is identical to that of the hybrid world, because the credit of each $P_i$ in the hybrid world is kept in lockstep with $\msf{bal}_{i}$ of $\F{Pay}$ in the ideal world. Specifically, the ``while'' loop of $U_\msf{Pay}$ delivers each payment to the ideal $\F{State}$ by updating its \msf{state} in the same order that payments will be received by the ideal $\F{Pay}$. This allows $\S$ to schedule payments in the ideal world within $O(1)$ or $O(\Delta)$ rounds, so that $\Z$ will be able to observe each event (e.g., by inspecting the output of the honest $P_{\neg i}$) at exactly the same round in which the event occurred in the hybrid world. During each virtual round of $\F{State}$, the execution of $\Pi_\msf{Pay}$ by the honest $P_{\neg i}$ will thus correspond to the behavior of $P_{\neg i}$ in the ideal world, since the variables in the comparison $\$X \le \msf{Contract}_\msf{Pay}.\msf{deposits}_i+\msf{paid}_i{-\msf{pay}_i-\msf{wd}_i}$ are up-to-date and any new payments that $\S_\A$ scheduled would arrive only in the next virtual round.
  \end{proof}

\subsection{Details of the Linked Payments Construction }
In the body of the paper (Section~\ref{sec:statechannel}) we presented the update function and auxiliary smart contracts (Figure ~\ref{fig:prot-chain-basic}) for the state channel protocol $\Pi_\msf{Linked}$. In Figure~\ref{fig:prot-chain} we define the local behavior of the parties.
%

\iftr
\begin{figure*}
  \centering
        { \bf  Protocol $\Pichain(\$X, T, P_1, ... P_\ell)$ \\}
        \vspace{2pt}
        \input{sections/prot_chain_contract}%
\mbox{\vspace{1pt}\hspace{\columnsep}}%
\begin{minipage}[t]{\columnwidth}
  \else

\begin{figure}
  \centering
      { \bf  Protocol $\Pichain(\$X, T, P_1, ... P_\ell)$ \\}
      \vspace{2pt}
      \begin{minipage}[t]{\columnwidth}
\fi 
      
  \begin{framed}
    \small
{ \bf  Local protocol for sender, $P_1$ }
\vspace{-4pt}
\begin{itemize}
\item[] on \textbf{input} \mtt{pay} from the environment:
  \begin{itemize}
  \item[] $x \samples \{0,1\}^\lambda$, and $h \leftarrow \hash(x)$
  \item[] pass $(\mtt{open},{h},\$X, T_\msf{Expiry})$ as input to $\F{State}^1$
  \item[] send $(\mtt{preimage}, x)$ to $P_\ell$
  \item[] if $(\mtt{preimage}, x)$ is received from $P_2$ before $T_\msf{Expiry}$, then pass $\mtt{complete}$ to $\F{State}^1$
  \end{itemize}
\item[] \textbf{at time} $T_\msf{Expiry}+\Delta$, if $\msf{PM}.\msf{published}(T_\msf{Expiry}, h)$, then
  \begin{itemize}
  \item[] pass input $\mtt{complete}$ to $\F{State}^1$
  \end{itemize}
\item[] \textbf{at time} $T_\msf{Dispute}$, then pass input $\mtt{dispute}$ to $\F{State}^{1}$
\end{itemize}

\hrule
\vspace{4pt}

{\bf  Local protocol for party $P_i$, where $2 \le i \le \ell - 1$  }
\begin{itemize}
\item[] \textbf{on receiving state} $(\mtt{inflight}, h, \_)$ from $\F{State}^{i-1}$
  \begin{itemize}
  \item[] store $h$
  \item[] provide input $(\mtt{open}, h, \$X, T_\msf{Expiry})$ to $\F{State}^{i}$
  \end{itemize}
\item[] \textbf{on receiving state} $(\mtt{cancel}, \_, \_)$ from $\F{State}^{i}$,
  \begin{itemize}
  \item[] provide input $(\mtt{cancel})$ to $\F{State}^{i-1}$
  \end{itemize}
\item[] \textbf{on receiving} $(\mtt{preimage}, x)$ from $P_\ell$ before time $T_\msf{Crit}$, where $\hash(x) = h$,
  \begin{itemize}
  \item[] pass $\mtt{complete}$ to $\F{State}^{i}$
  \item[] \textbf{at time} $T_\msf{Crit}$, if state $(\mtt{complete}, \_, \_)$ has not been received from $\F{State}^{i}$, then
    \begin{itemize}
    \item[] pass contract input $\msf{PM}.\mtt{publish}(x)$
    \end{itemize}
  \end{itemize}
\item[] \textbf{at time} $T_\msf{Expiry}+\Delta$,
  \begin{itemize}
  \item[] if $\msf{PM}.\mtt{published}(T_\msf{Expiry}, h)$, pass $\mtt{complete}$ to $\F{State}^i$
  \item[] otherwise, pass $\mtt{cancel}$ to $\F{State}^{i-1}$
  \end{itemize}
\item[] \textbf{at time} $T_\msf{Dispute}$, pass input $\mtt{dispute}$ to $\F{State}^{i-1}$ and $\F{State}^{i}$
\end{itemize}

\hrule \vspace{4pt}
{ \bf  Local protocol for recipient, $P_\ell$ }
\begin{itemize}
\item[] \textbf{on receiving} $(\mtt{preimage},x)$ from $P_1$, store $x$ and $h := \hash(x)$~
\item[]   \vspace{2pt} \textbf{on receiving state} $(\mtt{inflight}, h, \_)$ from $\F{State}^{\ell-1}$,
  \begin{itemize}
  \item[] multicast $(\mtt{preimage},x)$ to each party
  \item[] \textbf{at time} $T_\msf{Crit}$, if state $(\mtt{complete}, \_, \_)$ has not been received from $\F{State}^{\ell}$, then
    \begin{itemize}
    \item[] pass contract input $\msf{PM}.\mtt{publish}(x)$
    \end{itemize}
  \end{itemize}
\item[] \textbf{at time} $T_\msf{Dispute}$, pass input $\mtt{dispute}$ to $\F{State}^{\ell-1}$
\end{itemize}

\hrule \vspace{5pt}
\textbf{\color{blue} Other messages.} Messages involving the $\color{blue} \F{Pay}$ interface are routed between the environment and the $\F{State}$ functionality according to $\color{blue} \Pi_\msf{Pay}$ (see {\color{blue} Figure~\ref{fig:protpay}})
  \end{framed}
\end{minipage}

\iftr
\caption{Construction for $\Fchain$ in the $\F{State}$-hybrid world. Portions of the update function $U_{\msf{Linked},\$X}$ that are delegated to the underlying $U_\msf{Pay}$ update function (Figure~\ref{fig:prot-chain-basic}) are colored {\color{blue}blue} to help readability. The left column is duplicated from Figure~\ref{fig:prot-chain-basic}.}
\label{fig:prot-chain}
\end{figure*}
\else 
\caption{Construction for $\Fchain$ in the $\F{State}$-hybrid world. (Local portion only. See Figure~\ref{fig:prot-chain-basic} for the smart contract portion.) Portions of the update function $U_{\msf{Linked},\$X}$ that are delegated to the underlying $U_\msf{Pay}$ update function (Figure~\ref{fig:prot-chain-basic}) are colored {\color{blue}blue} to help readability.}
\label{fig:prot-chain}
\end{figure}
\fi 

We now state and provide a proof sketch of our main
theorem for \F{Linked}.

\begin{theorem}
Protocol $\Pichain$ realizes \F{Linked}\
functionality in the \F{State}-hybrid world.
\end{theorem}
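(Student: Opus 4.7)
The plan is to construct an ideal-world simulator $\S$ that runs a sandboxed copy of $\Pichain$ in the $\F{State}$-hybrid world and keeps it in lockstep with the ideal world containing $\F{Linked}$ together with the $\ell-1$ instances of $\F{Pay}$. Following the pattern already used for the $\Pi_{\msf{Pay}}$ proof, it suffices to construct $\S$ for the dummy adversary: whenever the environment instructs a corrupted party to interact with the hybrid world (provide inputs to its local $\F{State}^i$, send point-to-point messages, invoke $\msf{Contract}_\msf{PM}$, etc.), $\S$ replays the instruction inside its sandbox and reads off the induced state changes; whenever $\F{Linked}$ or a $\F{Pay}^i$ leaks an event in the ideal world, $\S$ injects the corresponding honest-party action into the sandbox. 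Because $\F{State}$ already hides all cryptography, the simulator is information-theoretic; the only randomness it needs to sample is the preimage $x$ on behalf of an honest $P_1$, and then only if $P_1$ is honest.

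The core of the argument is a mapping, maintained as an invariant, between the $\msf{flag}_i$ value in $\F{Linked}$ and the $\msf{flag}$ field inside $\F{State}^i$'s $\msf{state}$, together with a mapping between $\F{Pay}^i.\msf{bal}_{\L,\R}$ and the $(\msf{cred}_\L + \msf{deposits}_\L, \msf{cred}_\R + \msf{deposits}_\R)$ values induced by $U_\msf{Linked}$. Concretely, $\S$ issues $(\mtt{open},i)$ to $\F{Linked}$ precisely when $U_\msf{Linked}$'s transition from $\mtt{init}$ to $\mtt{inflight}$ fires in channel $i$; it issues $(\mtt{cancel},i)$ when a $\mtt{cancel}$ transition fires (whether via a $P_\R$ command or via a dispute whose on-chain $\msf{PM}$ query returns $\mtt{False}$); and it issues $(\mtt{complete},i)$ when a $\mtt{complete}$ transition fires (whether via $P_\L$'s command or via a dispute whose $\msf{PM}$ query returns $\mtt{True}$). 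The $\F{Pay}$ side effects fall out from the already-proved $\Pi_\msf{Pay}$ simulation, applied to the inner $U_\msf{Pay}$ layer that $U_\msf{Linked}$ wraps.

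To discharge the assertions of $\F{Linked}$ the plan is a structured case analysis on which parties are corrupted, combined with a schedule argument for the timing bounds. First I would argue that every $\msf{flag}_i$ reaches a terminal state by $T_\msf{Dispute}+\Delta = T + O(\ell+\Delta)$: the local protocol of each honest $P_i$ unconditionally schedules a $\mtt{dispute}$ at $T_\msf{Dispute}$, and $\F{State}$ guarantees the resulting on-chain resolution lands within $\Delta$. In the all-honest case the $\mtt{preimage}$ multicast from $P_\ell$ reaches every party within $\ell$ point-to-point hops after $P_1$ triggers, so every channel reaches $\mtt{complete}$ off-chain in $O(\ell)$ rounds without ever invoking $\msf{Contract}_\msf{PM}$. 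Second, for assertion 2 (no honest intermediary is trapped with $(\mtt{cancel},\mtt{complete})$), the key observation is that any honest $P_i$ whose outgoing channel enters $\mtt{complete}$ must have learned $x$ with $\hash(x)=h$ before $T_\msf{Crit}$; by the protocol it then invokes $\msf{PM}.\mtt{publish}(x)$, which is guaranteed to be recorded by $T_\msf{Expiry}$. Any subsequent dispute on the incoming channel therefore resolves through the $\msf{PM}.\mtt{published}$ branch of $\msf{Contract}_\msf{Linked}$ and yields $\mtt{complete}$, contradicting $\mtt{cancel}$. Third, for assertion 3 (atomicity between honest $P_1$ and honest $P_\ell$), I would use that honest $P_\ell$ only multicasts $x$ after its incoming channel reaches $\mtt{inflight}$, and honest $P_1$ only moves its own channel to $\mtt{complete}$ if it either receives $x$ from $P_2$ or sees $x$ on-chain by $T_\msf{Expiry}+\Delta$; in either case $\msf{Contract}_\msf{PM}$ sees $x$ by $T_\msf{Expiry}$, synchronizing both endpoints.

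The hard part, and the part most of the technical work will go into, is the dispute-timing analysis around $T_\msf{Crit}$, $T_\msf{Expiry}$, and $T_\msf{Dispute}$. I need to verify that the gaps $T_\msf{Crit} = T_\msf{Expiry} - \Delta$ and $T_\msf{Dispute} = T_\msf{Expiry} + \Delta + 3$ are large enough that no honest party is forced to act on stale information: in particular, whenever an honest $P_i$ learns $x$ late but before $T_\msf{Crit}$, its on-chain $\mtt{publish}$ transaction confirms by $T_\msf{Expiry}$, and whenever it has no evidence either way by $T_\msf{Dispute}$, its $\mtt{dispute}$ input is processed by $\msf{Contract}_\msf{Linked}$ through the $\F{State}$ resolution path in time for $\S$ to report a terminal $\msf{flag}_i$ within the $T + O(\ell+\Delta)$ budget of $\F{Linked}$. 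This same bookkeeping is what ensures that $\S$ never triggers the \mtt{Exception} branch of $\F{Linked}$, which by our convention would immediately distinguish the two worlds. Once these timing invariants are in place, indistinguishability reduces to the $\F{State}$ guarantees and the already-established $\Pi_\msf{Pay}$ simulation, and only the collision resistance of $\hash$ is needed (to rule out the negligible event that an adversary produces a distinct $x'$ with $\hash(x')=h$ that would desynchronize the sandbox from $\F{Linked}$).
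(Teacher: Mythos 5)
Your proposal follows essentially the same route as the paper's proof: a sandboxed simulation of $\Pichain$ for the dummy adversary whose correctness reduces to showing $\F{Linked}$ never raises an \texttt{Exception}, discharged by exactly the paper's three propositions (terminal flags within $T+O(\ell+\Delta)$, no honest intermediary stranded at $(\mtt{cancel},\mtt{complete})$ via the $T_\msf{Crit}$/\,$\msf{PM}$ argument, and sender--receiver atomicity from the unforgeability of the preimage). The extra bookkeeping you add --- the explicit flag/balance correspondence invariant and the remark on hash collisions --- is consistent with, and slightly more explicit than, what the paper writes down.
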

\begin{proof} (Sketch)
The ideal world simulator
$\S$ for the dummy real world adversary runs a
sandboxed execution of $\Pi_\msf{Linked}$ through
which it relays instructions while faithfully
simulating honest parties exactly as described in
$\Pichain$.
Note that in the simulation, \sim\ would act both
as \F{State}.
%
While \sim\ itself has quite a rich interface via
\F{Linked}, the crux of the proof will be in
proving that during the course of this
interaction, \F{Linked}\ never raises an
\mtt{Exception}.

\begin{proposition}
For each $i \in 1...(\ell-1)$, $\msf{flag}_i$ must
be in a terminal state, i.e., $\msf{flag}_i \in \{
\mtt{cancel}, \mtt{complete} \}$ at time $T +
O(\ell + \Delta)$ (or at time $T + O(\ell)$ when
all parties are honest). 
\label{prop:one}
\end{proposition}
\noindent{\it Proof sketch.} 
First we consider the case when all parties are
honest (and each party has sufficient balance). In
this case, the sender $P_1$ starts a
fresh round on its state channel (with $P_2$)
where it presents 
$h \leftarrow \mathcal{H}(x)$ for a randomly chosen
$x$. Note that $P_1$ also sends $x$ to $P_\ell$.
Observe that the \msf{flag}\ variable on $\F{State}^1$
state channel will be set to \msf{inflight},
following which (honest) party $P_2$ would start a
fresh round on its state channel with $P_3$ where
it simply forwards $h$ that it received from
$\F{State}^1$. This process repeats until it is
$P_{\ell}$'s turn where $P_\ell$ multicasts the
preimage $x$ that it received from $P_1$.
Then, each (honest) party $P_i$ simply passes a
\mtt{complete}\ command to $\F{State}^i$
Note that each of these steps take
one time step. Therefore, the whole protocol
completes within $\ell + 2$ time steps (all
off-chain), i.e., by 
time $T + O(\ell)$. Furthermore, each of the flag
variables $\msf{flag}_i$ would be in a terminal
state \mtt{complete}. 

When not all parties are honest, then it is
possible that $\F{State}^i$ might receive a
\mtt{cancel}\ command from corrupt $P_i$.
In this case, the \mtt{cancel}\ command is
propagated all the way back to $P_1$ by the honest
parties. In this case, all the local flag
variables in $\F{State}^i$ are set to
\mtt{cancel}\ and $P_1$ ends getting its deposit
back.
Note that the \mtt{cancel}\ event is confirmed on
the on-chain auxiliary contract. Since this is
settled on-chain, we incur an additional $\Delta$
delay.
One final scenario is when each $\F{State}^i$ is
set to \msf{inflight}\ (i.e., in the forward
path), and $P_\ell$ revealed the preimage but a
corrupt party does not agree to the update
off-chain. Once again, the state channel
synchronization process escalates to the on-chain
contract where this will be settled, and the flag
variables will be set in this case to
\mtt{complete}.
Note that the on-chain escalation will induce an
additional $\Delta$ delay (but this happens in
parallel for each state channel instance) and
therefore the proposition holds in this case as
well. 
This concludes the proof of the
proposition. 

\begin{proposition}
For each $i \in 1...(\ell-2)$, if $P_i$ is
honest, it must not be the case that
$(\msf{flag}_i, \msf{flag}_{i+1}) =
(\mtt{cancel},\mtt{complete})$  at time $T +
O(\ell + \Delta)$ (or at time $T + O(\ell)$ when
all parties are honest). 
\label{prop:two}
\end{proposition}
\noindent{\it Proof sketch.} 
We discuss the case when all parties are
honest. As described in the proof of
Proposition~\ref{prop:one}, when all parties are
honest, we have the payment will be settled within
time $T + O(\ell)$, and furthermore the flag
variables will be such that $\msf{flag}_i =
\mtt{complete}$ for all $i$.
Next we focus on the interesting case where some
parties are corrupt.
Then, we need to prove that the \msf{flag}\
variables are pairwise consistent.
When $P_i$ and $P_{i+1}$ are both honest, this
follows from the fact that state channels remain
synchronzied no matter how the payment between
$P_1$ and $P_\ell$ is settled. The interesting
case is when say $P_i$ is honest (condition in
proposition statement) and $P_{i+1}$ is corrupt.
That is, it suffices to prove that if
$\msf{flag}_i = \mtt{cancel}$, then it must hold
that $\msf{flag}_{i+1} = \mtt{cancel}$ also
holds.
Now if $P_\ell$ was corrupt and
$P_{i+1}$ revealed the hash preimage (received
from $P_\ell$) but the payment settlement
was escalated to the preimage manager contract
where the preimage was revealed, then it follows
from the protocol description that in this case
$\msf{flag}_i = \mtt{complete}$ (i.e., the
proposition precondition does not hold).
The remaining cases are relatively straightforward as
the only way $\msf{flag}_i$ is set to
\mtt{cancel}\ is when $P_{i+1}$ supplied
\mtt{cancel} to the state channel between $P_i$
and $P_{i+1}$ (and we already handled the case
when the preimage manager is involved in changing
$\msf{flag}_i$ to \mtt{complete}. 
This concludes the proof of the proposition.

\begin{proposition}
If $P_1$ and $P_\ell$ are honest, then
$(\msf{flag}_1,\msf{flag}_{\ell-1}) \in \{
(\mtt{complete},\allowbreak \mtt{complete}),
(\mtt{cancel},\allowbreak \mtt{cancel}) \}$  at time $T +
O(\ell + \Delta)$ (or at time $T + O(\ell)$ when
all parties are honest).
\label{prop:three}
\end{proposition}
\noindent{\it Proof sketch.} 
When all parties are honest, the proposition
directly follows from the proof of
Propositions~\ref{prop:one} and~\ref{prop:two}.
The interesting case is when some parties are
corrupt.
Note that when both $P_1$ and $P_\ell$ are honest,
it follows that if $P_\ell$ received
$\msf{inflight}$ from $\F{State}^{\ell-1}$ then it
would multicast the preimage of $h$ to all
parties.
We analyze two cases depending on whether $P_\ell$
multicasted the preimage or not. Note that since
$P_1$ and $P_\ell$ are honest the adversary does
not have knowledge (except with negligible
probability) of the preimage unless it was
revealed by $P_\ell$.
Suppose $P_\ell$ revealed the preimage of
$h$. Then, in this case, we will argue
that the end state for the flag variables will be
\mtt{complete}.
This is because once the preimage was revealed,
all honest parties become aware of it and each
honest $P_i$ would issue a \mtt{complete}\ command
to $\F{State}^i$ and either synchronize the state
on-chain (with the help of the preimage manager
contract) or off-chain.
This combined with the fact that $P_1$ is honest
implies that the end state for the flag variables
$\msf{flag}_1, \msf{flag}_\ell$ would be
\mtt{complete}.
We now turn to the case when $P_\ell$ never
revealed the preimage.
Then from the protocol description it follows that
the adversary is not aware of the preimage and
therefore cannot transition a state channel to
\mtt{complete}.
Therefore in this case, each state channel will
result in canceling the \msf{inflight}\ status,
and it follows that both $P_1$ and $P_\ell$ will
end up with flag variables $\msf{flag}_1,
\msf{flag}_\ell$ set to \mtt{cancel}.
This concludes the proof. 
\end{proof}

\subsection{Local Protocol for the State Channel Construction }
\label{sec:detail:statechannel}
In the body of the paper (Figure ~\ref{fig:prot:statechannel:contract}) we presented the smart contract portion of the state channel protocol. In Figure~\ref{fig:prot:statechannel} we define the local behavior of the parties.

\paragraph*{Reaching agreement off-chain}
The main role of the local portion of the protocol is to reach agreement on which inputs to process next. To facilitate this we have one party, $P_1$, act as the leader.
The leader receives inputs from each party, batches them, and then requests signatures from each party on the entire batch. After receiving all such signatures, the leader sends a $\mtt{COMMIT}$ message containing the signatures to each party.
This resembles the ``fast-path'' case of a fault
tolerant consensus protocol~\cite{pbft};
However, in our
setting, there is no need for a view-change
procedure to guarantee liveness when the leader
fails; instead the fall-back option is to use the
on-chain smart contract.

\iftr
\begin{figure*}[!ht]
        { \centering \bf  Protocol $\Pi_\msf{State}(U, P_1, ... P_N)$ \\}
        \vspace{2pt}

\input{sections/prot_state_contract}%
\mbox{\vspace{1pt}\hspace{\columnsep}}%
\begin{minipage}[t]{\columnwidth}
\else 
\begin{figure}[!ht]
  { \centering \bf  Protocol $\Pi_\msf{State}(U, P_1, ... P_N)$ \\}
  \vspace{2pt}
  \begin{minipage}[t]{\columnwidth}
\fi 
          
  \begin{framed}
    \small
      {\centering \bf  Local protocol for the leader, $P_1$ \\}
    
  \begin{itemize}
  \item[] Proceed in consecutive virtual rounds numbered $r$:
    \begin{itemize}
    \item[] Wait to receive messages $\{ \mtt{INPUT}( \vrj) ) \}_j$ from each party.
    \item[] Let $\inr$ be the current state of $\auxin$ field in the the contract.
    \item[] Multicast $\mtt{BATCH}( r, \inr, \{ \vrj \}_j )$ to each party.
    \item[] Wait to receive messages $\{ (\mtt{SIGN}, \srj ) \}_j$ from each party.
    \item[] Multicast $\mtt{COMMIT}(r, \{ \srj \}_j )$ to each party.
    \end{itemize}
    \end{itemize}

  \hrule
  \vspace{5pt}
    
    {\centering {\bf  Local protocol for each party $P_i$} (including the leader) \\}
  \begin{itemize}
  \item[] $\msf{flag} := \mtt{OK} \in \{ \mtt{OK},
  \mtt{PENDING} \}$
  \item[] Initialize $\msf{lastRound} := -1$
  \item[] Initialize $\msf{lastCommit} := \bot$
  \end{itemize}
{\bf Fast Path} (while $\msf{flag} == \mtt{OK}$) \\
\begin{itemize}
\item[] Proceed in sequential rounds $r$, beginning with $r := 0$
\item[] Wait to receive input $\vri$ from the environment. Send $\mtt{INPUT}(\vri)$ to the leader.
\item[] Wait to receive a batch of proposed
inputs, $\mtt{BATCH}(r,\allowbreak
\inrp,\allowbreak \{ \vrjp \}_j )$ from the
leader. Discard this proposal if $P_i$'s own input
is omitted, i.e., $\vrip \neq \vri$. Discard this proposal if $\inrp$ is not a \emph{recent} value of $\auxin$ in the contract.
\item[] Set $(\msf{state},\msf{out}_r) := U( \msf{state}, \{\vrj\}_j, \inrp )$
\item[] Send $(\mtt{SIGN},\sri)$ to $P_1$, where $\sri := \msf{sign}_i(r \| \msf{out}_r \| \msf{state})$
\item[] Wait to receive $\mtt{COMMIT}(r,\{\srj \}_j )$ from the leader. Discard unless $\msf{verify}_j(\srj \| \msf{out}_r \| \msf{state})$ holds for each $j$. Then:
  \begin{itemize}
  \item[] $\msf{lastCommit} := (\msf{state},
  \msf{out}_r, \{\srj \}_j)$; $\msf{lastRound} :=
  r$ 
  \item[] If $\msf{out}_r \neq \bot$, invoke $\mtt{evidence}(r, \msf{lastCommit})$ on the contract.
  \end{itemize}
  \item[] If  $\mtt{COMMIT}$ was not received
  within one time-step, then:
  \begin{itemize}
  \item[] if $\msf{lastCommit} \neq \bot$, invoke the
  $\mtt{evidence}(r-1, \msf{lastCommit})$ and
  $\mtt{dispute}(r)$ methods of $C$
  \end{itemize}
\end{itemize}

{\centering \bf Handling on-chain events}

\begin{itemize}
\item[] On receiving $\mtt{EventDispute(r,\_)}$, if $r \le \msf{lastRound}$, then
invoke $\mtt{evidence}(\msf{lastRound}, \msf{lastCommit})$ on the contract. Otherwise if $r = \msf{lastRound} + 1$, then:
  \begin{itemize}
  \item[] Set $\msf{flag} := \mtt{PENDING}$, and
  interrupt any ``waiting'' tasks in the fast path
  above. Inputs
  are buffered until returning to the fast path.
\item[] Send $\mtt{input}(r, \vri)$ to the contract.
\item[] Wait to receive $\mtt{EventOffchain}(r)$
  or $\mtt{EventOnchain}(r)$ from the contract.
  Attempt to invoke $\mtt{resolve}(r)$ if $\Delta$ elapses, then continue waiting. In
  either case:
  \begin{itemize}
    \item[] $\msf{state} := \msf{state'}$
    \item[] $\msf{flag} := \mtt{OK}$
    \item[] Enter the fast path with $r := r + 1$
    \end{itemize}
  \end{itemize}
\end{itemize}

  \end{framed}
\end{minipage}

\iftr
\caption{Construction of a general purpose state channel, $\F{State}$, parameterized by transition function $U$. For readability, the left column is duplicated from Figure~\ref{fig:prot:statechannel:contract}.}
  \label{fig:prot:statechannel}
\end{figure*}
\else 
\caption{Construction of a general purpose state channel, $\F{State}$, parameterized by transition function $U$. (Local portion only, for the smart contract see Figure~\ref{fig:prot:statechannel:contract}.)}
\label{fig:prot:statechannel}
\end{figure}
\fi 

 \label{sec:prot:state:appendix}

\iftr

\paragraph{Invariants maintained by the state channel}
Note that that when $\mtt{EventDispute}(r)$ is received, we can assume that $r \le \msf{bestRound}+1$ in the view of any honest party. This is because the $\mtt{EventDispute}$ event can only be triggered for round $\msf{bestRound}+1$ of the contract, and $\msf{bestRound}$ is only set in the contract after receiving inputs from every honest party.

If a party $P_i$ receives $\mtt{EventOffchain}(r)$ while in the $\msf{flag} = \mtt{DISPUTE}$ condition, it can be safely assumed that a $\mtt{BATCH}$ message has already been received and $\msf{state}$ has already been updated to reflect the new state. This is because $\mtt{EventOffChain}$ can only be triggered after receiving an update containing signatures from all parties, including $P_i$.

\begin{lemma}
If any party receives $\msf{EventDispute}(r, T)$,
then every party will receive from the contract
either (1)
$\msf{EventOffchain}(r)$ within time $T + \Delta$,
or (2) $\msf{EventOnchain}(r)$ within time $T +
2\Delta$.
\label{lem:completion}
\label{lem:either}
\end{lemma}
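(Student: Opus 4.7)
My plan is to prove the lemma by a case split on whether, at the time the dispute is registered on-chain, there exists any honest party whose local variable $\msf{lastRound} \ge r$ (i.e., has already received a fully signed $\mtt{COMMIT}$ for round $r$ or later). The two cases will correspond respectively to the $\mtt{EventOffchain}$ and $\mtt{EventOnchain}$ conclusions, and the main task is to track the worst-case blockchain propagation delay $\Delta$ carefully through each case.

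First I would normalize the timing setup: by the contract code, $\mtt{EventDispute}(r,T)$ is emitted when some party invokes $\mtt{dispute}(r)$, at which point $\msf{deadline}$ is set to (at most) $T+\Delta$ on the contract's clock. Using the blockchain round bound, every party sees the $\mtt{EventDispute}(r,T)$ event by time at most $T+\Delta$, and any on-chain transaction they send in response is confirmed within an additional $\Delta$. In Case A (some honest party holds $\msf{lastCommit}$ with $\msf{lastRound} \ge r$), that party's local handler immediately invokes $\mtt{evidence}(\msf{lastRound}, \msf{lastCommit})$; the signatures verify because $\msf{lastCommit}$ was produced by the fast path, so the contract flips $\msf{flag}$ back to $\mtt{OK}$ and emits $\mtt{EventOffchain}(r)$. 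Propagating through the blockchain, every party observes $\mtt{EventOffchain}(r)$ by time $T+\Delta$, giving conclusion (1). (The corner case where the responding party is the one that originally invoked $\mtt{dispute}$ is handled identically.)

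In Case B (no honest party has $\msf{lastRound} \ge r$), every honest party's $\msf{EventDispute}$ handler falls into the branch $r = \msf{lastRound}+1$: it sets $\msf{flag} := \mtt{PENDING}$, interrupts its fast-path wait, and submits $\mtt{input}(r,\vri)$ to the contract. By time $T+\Delta$ all honest inputs are confirmed on-chain and the deadline $\msf{deadline} \le T+\Delta$ has passed, so any party may invoke $\mtt{resolve}(r)$. As long as the flag is still $\mtt{DISPUTE}$, $\mtt{resolve}$ executes $U$, emits $\mtt{EventOnchain}(r, \msf{state})$, and advances $\msf{bestRound}$; this is confirmed, and thus observed by every party, by time $T+2\Delta$, giving conclusion (2). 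I would also argue that if a malicious party races by submitting an earlier $\mtt{evidence}$ during this window, that is harmless: it simply triggers conclusion (1) first, which still satisfies the lemma.

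The main obstacle is the subtle on-chain/off-chain race that the paper flags explicitly after Figure~\ref{fig:statetransitions}: an honest party in Case B could, \emph{after} seeing $\mtt{EventDispute}$, receive a delayed $\mtt{COMMIT}$ off-chain that would normally update $\msf{lastCommit}$ to round $r$. I need to invoke the protocol rule that upon $\mtt{EventDispute}$ the party interrupts its fast-path tasks and buffers further inputs until the dispute resolves; this guarantees that Case A and Case B remain genuinely exclusive from the contract's point of view and that the party does not attempt a too-late $\mtt{evidence}$ invocation that would fail to confirm before the deadline. With this invariant in hand, the timing bookkeeping above closes the proof, and the matching upper bounds $T+\Delta$ and $T+2\Delta$ follow immediately.
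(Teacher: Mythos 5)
Your proposal is correct and follows essentially the same route as the paper's proof sketch: the same two-case split (an honest party can supply $\mtt{evidence}$ for round $r$ or later, yielding $\mtt{EventOffchain}(r)$ by $T+\Delta$, versus all honest parties submitting $\mtt{input}(r,\cdot)$ and then $\mtt{resolve}$, yielding $\mtt{EventOnchain}(r)$ by $T+2\Delta$), with the same $\Delta$-per-on-chain-round bookkeeping. Your additional explicit treatment of the on-chain/off-chain race via the $\mtt{PENDING}$-flag pausing rule is a welcome refinement of a point the paper only addresses in the protocol description.
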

\begin{proof} (Sketch)
Since some party received $\msf{EventDispute}$
from the contract, this means that the contract
received $\mtt{dispute}$ from one of the
parties.
Observe that the variable $\msf{deadline}$ is set
to $T + \deltareceive$.
Now, if $r$ corresponds to an earlier round (i.e.,
not the current incomplete 
round), then an honest party upon receiving
$\msf{EventDispute}$ would send an $\mtt{update}$
command to the contract with the state (along with
signatures) corresponding to the most recent
completed round.
In this case, it follows that the contract would
emit $\msf{EventOffChain}$ before time $T +
\deltareceive$.
On the other hand, suppose $r$ corresponds to
the current round.
In this case, we have that honest parties would
have $r = \msf{lastRound}+1$, and they would send
their current round input $(r,\vri)$ to the
contract (which would be accumulated by $C$) as
their response to $\msf{EventDispute}$.
Then, at time $T+\Delta$, honest parties would
send $\mtt{resolve}$ right after $T+\Delta$
which ensures that they get a response from the
on-chain contract before time $T + 2\Delta$.
This concludes the proof sketch. 
\end{proof}

\begin{lemma}
If any honest party receives
$\msf{EventDispute}(r,T)$ from the contract, but
does not receive $\mtt{COMMIT}(r,\_)$ from the
leader before time $T$, then no party will receive
$\mtt{COMMIT}(r+1,\_)$ until receiving
$\msf{EventOnchain}(r)$ or
$\msf{EventOffchain}(r)$ from the contract.
\label{lem:nocom}
\end{lemma}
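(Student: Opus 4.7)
The plan is to track the local control state of the honest party $P_i$ that receives $\msf{EventDispute}(r,T)$, and to reduce the lemma to a statement about which round-tagged signatures $P_i$ can possibly emit. Observe that any accepted $\mtt{COMMIT}(r+1,\{\srj\}_j)$ must contain, for every $j$, a signature verifying against $(r+1 \| \msf{out}_{r+1} \| \msf{state})$; by existential unforgeability of signatures under one-way functions, that includes a genuine signature from $P_i$ on a round-$(r+1)$ message. Hence it suffices to show that $P_i$ does not produce any such signature before it receives $\msf{EventOffchain}(r)$ or $\msf{EventOnchain}(r)$.

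Next I would trace $P_i$'s protocol state. Under the hypothesis, $P_i$ has not received $\mtt{COMMIT}(r,\cdot)$ by time $T$, so $\msf{lastRound} \le r-1$ and its fast-path iteration for round $r$ is currently blocked on the ``Wait to receive $\mtt{COMMIT}(r,\cdot)$'' step. When $\msf{EventDispute}(r,T)$ fires, the handler condition $r \le \msf{lastRound}$ fails while $r = \msf{lastRound}+1$ holds (using the invariant from the discussion preceding Lemma~\ref{lem:completion} that $r \le \msf{bestRound}+1$ and that $\msf{bestRound}$ is driven by evidence messages $P_i$ has not yet witnessed). So the PENDING branch executes: it sets $\msf{flag} := \mtt{PENDING}$ and explicitly interrupts the outstanding fast-path task. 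Since the fast-path loop is gated by $\msf{flag} == \mtt{OK}$, while PENDING the party runs no further fast-path iteration and in particular emits no SIGN message for any round $\ge r+1$.

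The exit from PENDING is the single branch given in the protocol: $P_i$ waits for $\mtt{EventOffchain}(r)$ or $\mtt{EventOnchain}(r)$ (invoking $\mtt{resolve}(r)$ after $\Delta$ if necessary), updates $\msf{state}$, sets $\msf{flag} := \mtt{OK}$, and only then enters the fast path with round index $r+1$. Combining this with the previous step, any signature $\sigma_{r+1,i}$ produced by $P_i$, and therefore any $\mtt{COMMIT}(r+1,\cdot)$ that any honest recipient will accept, must be delivered strictly after $P_i$ receives one of the two resolution events.

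The step I expect to require the most care is ruling out a race in which a late $\mtt{COMMIT}(r,\cdot)$ message reaches $P_i$ after time $T$ and silently advances $\msf{lastRound}$ to $r$, letting $P_i$ enter round $r+1$ through the normal fast path and sign without going through the PENDING exit. This is precisely the on-chain/off-chain conflict highlighted at the end of Section~\ref{sec:statechannel}: upon receiving $\mtt{EventDispute}(r)$ with no prior commit for round $r$ in hand, $P_i$ pauses its off-chain routine, so a late-arriving $\mtt{COMMIT}(r,\cdot)$ is not consumed but dropped, and $\msf{lastRound}$ cannot be advanced except through the dispute-resolution branch. Formalizing that ``pause and drop'' behavior, and arguing that the PENDING handler is the unique re-entry point into the fast path, is the one place the proof needs to be explicit rather than routine.
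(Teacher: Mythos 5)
Your proposal is correct and rests on essentially the same two pillars as the paper's proof sketch: a valid $\mtt{COMMIT}(r+1,\_)$ must contain a genuine signature from the honest disputing party (unforgeability), and that party, having entered the $\mtt{PENDING}$ state upon $\msf{EventDispute}(r,T)$ without evidence for round $r$, produces no round-$(r+1)$ signature until it exits that state via $\msf{EventOffchain}(r)$ or $\msf{EventOnchain}(r)$. If anything, your version is tighter on the safety claim than the paper's --- you make explicit the pause-and-drop handling of a late-arriving $\mtt{COMMIT}(r,\_)$, whereas the paper's sketch devotes most of its space to arguing that the resolution events will eventually be emitted, a liveness point that really belongs to Lemma~\ref{lem:either}.
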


\begin{proof} (Sketch)
Note that for the leader $P_1$ to generate a valid
$\mtt{COMMIT}$ message for round $r+1$, it needs
signatures from all parties.
Suppose some honest party received
$\msf{EventDispute}(r,T)$, then it must hold that
$r = \msf{bestRound}+1$.
This in particular means that the contract
received signatures on the state corresponding to
the $(r-1)$-th round from all parties. 
Since $P_1$ would not be able to forge honest
parties' signatures, it follows that the honest
parties must have synchronized their state until
the  $(r-1)$-th round (either on-chain or
off-chain).
Now, suppose it holds that some honest party $P_j$
did not receive a message $\mtt{COMMIT}$ message
for round $r$, then there are two cases to
handle.
First, if $r$ is not the current round, then in
this case honest parties $P_{j}$ would directly send an
$\mtt{update}$ command with the most recent
completed round along with the latest
synchronized state to the contract.
This has the effect of making the contract emit
$\msf{EventOffChain}$ for round $r$.
Next if $r$ is indeed the current round, then it
follows 
that honest party would
immediately send to the contract (1) an
$\mtt{update}$ message with the state
corresponding to the $(r-1)$-th round, and
immediately afterwards (2) a $\mtt{dispute}$
message for round $r$.
Taken together, these messages have the effect of
ensuring that the contract emits
$\msf{EventDispute}$ for round $r$.
Following this and since
$r = \msf{lastRound} + 1$, an honest party $P_i$
would respond by sending their inputs $\vri$ to
the contract (irrespective of whether honest $P_i$
received a message of the form
$\mtt{COMMIT}(r,\_)$).
Then, the rest of the state is synchronized
on-chain (i.e., parties submit their inputs
directly to the contract).
At time $T + \Delta$, honest parties would send
$\mtt{resolve}$ to the contract which would
result in the contract emitting
$\mtt{EventOnChain}$ for round $r$. This concludes
the proof sketch.
We defer this proof to the full version~\cite{fullonline}.
\end{proof}

\begin{lemma} If any honest party receives
$\msf{EventDispute}(r,T)$ from the contract, and
receives $\mtt{COMMIT}(r)$ from the leader before
time $T$, then every party will receive
$\msf{EventOffchain}(r)$ by time
$T+2\Delta$.
\label{lem:onchain}
\end{lemma}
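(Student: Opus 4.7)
(Plan sketch) My plan is to directly trace the honest party's actions after the two hypothesised events occur, and show that $\mtt{evidence}(r,\cdot)$ is confirmed on-chain quickly enough to trigger $\mtt{EventOffchain}(r)$ within the allotted budget. Let $P_i$ be the honest party in the lemma's hypothesis. Since $P_i$ received a complete $\mtt{COMMIT}(r,\{\sigma_{r,j}\}_j)$ before time $T$, by inspection of the fast path it has verified every $\sigma_{r,j}$ and stored $\msf{lastCommit} = (\msf{state}_r, \msf{out}_r, \{\sigma_{r,j}\}_j)$ with $\msf{lastRound} = r$. Moreover, because $\mtt{EventDispute}(r,T)$ is emitted by the contract at its local time $T - \Delta$ (when $\msf{deadline}$ is set to $T$) and on-chain events propagate to parties within $\Delta$, $P_i$ observes the dispute event by time $T$ at the latest.

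Next I would chain these two facts with the on-chain event handler in Figure~\ref{fig:prot:statechannel}: upon seeing $\mtt{EventDispute}(r,\_)$ with $r \le \msf{lastRound}$, $P_i$ immediately invokes $\mtt{evidence}(r, \msf{lastCommit})$. This transaction is confirmed by the contract within one on-chain round, i.e.\ by time $T + \Delta$. When the contract processes it, every guard passes: $r = \msf{bestRound}+1$ so the stale-round check is not triggered, all signatures verify since they were vetted by $P_i$ during the fast path, and the branch $\msf{flag} == \mtt{DISPUTE}$ is taken, emitting $\mtt{EventOffchain}(\msf{bestRound}+1) = \mtt{EventOffchain}(r)$ and updating $\msf{bestRound} := r$. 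Since on-chain events again propagate to every party within $\Delta$, every party receives $\mtt{EventOffchain}(r)$ by time $T + 2\Delta$, as required.

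The main obstacle I anticipate is the ordering corner case in which $P_i$ observes $\mtt{EventDispute}(r,T)$ \emph{before} receiving $\mtt{COMMIT}(r)$. In that branch the written pseudocode flips $P_i$ into $\mtt{PENDING}$ and interrupts the fast-path wait, so the $\msf{lastCommit}$ update need not happen through the normal path. To close this gap I would argue (or minorly amend the protocol) that $P_i$ continues to accept and verify any in-flight $\mtt{COMMIT}(r,\cdot)$ even while $\mtt{PENDING}$, and that as soon as the full set of signatures is in hand --- which by hypothesis occurs by time $T$ --- $P_i$ submits the corresponding $\mtt{evidence}(r,\cdot)$ transaction; the two-$\Delta$ accounting is then identical. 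I would also briefly note that competing on-chain actions cannot derail the plan: $\mtt{resolve}(r)$ is rejected until contract time reaches $\msf{deadline} = T$ and so cannot win a race that ends before $T + \Delta$, while any other party's $\mtt{evidence}(r',\cdot)$ with $r' \ge r$ would itself emit $\mtt{EventOffchain}(\msf{bestRound}+1) = \mtt{EventOffchain}(r)$, which only strengthens the conclusion.
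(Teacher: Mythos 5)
Your argument is correct and rests on the same mechanism as the paper's: the honest party holding the fully signed $\mtt{COMMIT}(r)$ submits $\mtt{evidence}(r,\msf{lastCommit})$ upon observing the dispute, the contract's $\mtt{evidence}$ handler clears the $\mtt{DISPUTE}$ flag and emits $\mtt{EventOffchain}(r)$, and the $\Delta$-bounds on confirmation and event propagation yield $T+2\Delta$. The only structural difference is framing: the paper argues by elimination, invoking Lemma~\ref{lem:either} and then showing that $\msf{EventOnchain}(r)$ cannot be emitted when $\mtt{COMMIT}(r)$ arrived in time, whereas you trace the positive path directly; the underlying reason is identical. Where you go beyond the paper is the ordering corner case you flag, namely $\mtt{EventDispute}(r,\cdot)$ arriving before $\mtt{COMMIT}(r)$, and this is a genuine subtlety the paper's sketch passes over: as written, the local protocol then sets $\msf{flag}:=\mtt{PENDING}$, interrupts the fast-path wait, and deliberately refrains from submitting $\mtt{evidence}(r,\cdot)$ (this is exactly the ``on-chain / off-chain conflict'' hazard discussed in Section~\ref{sec:statechannel}), so the round would be settled by $\mtt{resolve}$ and $\mtt{EventOnchain}(r)$ rather than $\mtt{EventOffchain}(r)$, contradicting the lemma's conclusion. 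Your proposed repair (continue verifying the in-flight $\mtt{COMMIT}$ while $\mtt{PENDING}$ and submit evidence once it completes) does restore the claim --- and it is worth noting that the contract's $\mtt{evidence}$ handler carries no deadline check, so such late evidence still succeeds unless a competing $\mtt{resolve}(r)$ is confirmed first --- but it partially reintroduces the race the pause was designed to avoid. The cleanest reading, which both your proof and the paper's implicitly adopt, is that the hypothesis ``$\mtt{COMMIT}(r)$ received before time $T$'' is meant to leave enough slack for the evidence transaction to be confirmed before $\mtt{resolve}$ becomes admissible.
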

\begin{proof} (Sketch).
By Lemma~\ref{lem:either}, it follows that we only need to show
that $\msf{EventOnchain}(r)$ will not be emitted
whenever $\mtt{COMMIT}(r)$ is received before time
$T$.
Given that some honest party $P_j$ received
$\mtt{COMMIT}(r)$ from the leader, then this means
that all parties must have synchronized their
state until round $r-1$.
In addition, since $P_j$ also received
$\msf{EventDispute}(r,T)$ from the contract, this
implies that some party issued $\mtt{dispute}$
to the contract. Observe that $r = \msf{lastRound} 
\msf{bestRound}+1$ also holds.
Now each honest party $P_i$ would respond to
the $\msf{EventDispute}$ message with an
$\mtt{update}$ command that would result in the
on-chain contract emitting
$\msf{EventOffchain}(r)$ within time $T +
\Delta$. This concludes the proof.
\end{proof}

\begin{theorem}
The $\Pi_{\msf{State}}$ protocol realizes the 
$\F{State}$ functionality assuming one way functions exist. 
\end{theorem}
\begin{proof} (Sketch) The ideal world simulator
$\S$ for the dummy real world adversary runs a
sandboxed execution of $\Pi_\msf{State}$ through
which it relays instructions as described below.

When the
simulator $\S$ receives a message of the form
$(i,\vri)$ from $\F{State}$ (i.e., leaked honest
inputs), it stores this message to use later in
the simulation for the current round.

Following this, \sim\ simulates the local protocol
for each honest player to the \adv\ using leaked
inputs.
%
If the leader is honest, then \sim\ acting as the
leader, uses the leaked inputs (from
$\F{State}$) as inputs received from honest
parties.
\sim\ waits to receive round inputs from \adv\ for
the corrupt parties.
If it receives inputs from all corrupt parties,
then acting as the leader, \sim\ accumulates these
values, and multicasts it to all corrupt parties.
Following this, \sim\ waits to receive signatures
on the batched inputs. \sim\ generates signatures
on behalf of the honest parties (i.e., it uses a
simulated public/signing keys for the honest
parties). Once all signatures are received, then
these signatures (i.e., corresponding to both
honest and corrupt parties) are multicasted to the
adversary. \sim\ then increments the round number
and continues with the simulation.
On the other hand, if the leader is corrupt, then
\sim\ sends simulated honest messages (according
to $\Pi_\msf{State}$ to \adv\ and
receives messages \mtt{BATCH}\ and \mtt{COMMIT}\
from \adv. 

Of course, all this is good only when the corrupt
parties follow the off-chain protocol faithfully.
Now a corrupt party might not send its inputs or
its signature or might directly trigger the
contract or might attempt to keep the contract and
off-chain executions out of sync.
Not supplying inputs is handled in a
straightforward manner by just replacing it with
$\bot$.
Not supplying signatures on batched inputs
corresponds to a halt in the off-chain round.
This in turn would result in honest parties having
to escalate the off-chain round to the on-chain
contract. Lemma~\ref{lem:either} then shows how
the execution gets completed on-chain. Since \sim\
closely mimics the real protocol, the
indistinguishability follows from the proof of
Lemma~\ref{lem:either} in this case.
Directly raising a dispute the on-chain contract would
result in a state change that notifies the
(simulated) honest parties who can then issue an
\mtt{update}\ command to the contract. This case
is handled by Lemma~\ref{lem:onchain}.
Finally, \adv's attempts to keep the on-chain
contract and off-chain executions out of sync are
handled by observing that on-chain contract
notifications (in particular \msf{EventDispute})
are available to all honest parties who can then
recover from inconsistent corrupt inputs by
issuing an \mtt{update}\ command to the contract
that syncs the state on the on-chain chain with
the off-chain state. This case is handled by
Lemma~\ref{lem:either} and
Lemma~\ref{lem:onchain}.

Finally, we consider the case 
when the leader is unable to provide a 
\mtt{COMMIT}\ message for this round. This could
happen if either the leader is corrupt, or the
corrupt parties did not submit a signature on the
\mtt{BATCH}\ message the contains this round's
messages.
In either case, Lemma~\ref{lem:nocom} applies and
we are guaranteed that the next round honest
messages are exchanged only after the current
round state is synchronized among the parties.

Contract inputs and outputs (including coins) are
handled in the straightforward manner. One thing
to note is that the event emitted by the on-chain
contract may be delayed. Other than this,
simulation proceeds by faithfully applying the
parameterized update function to the state and
also maintains a variable $\msf{bestRound}$ which
in addition to variables $\msf{state}$ and
$\msf{applied}$, keeps track of the on-chain state
(and prevents replay attacks).
\end{proof}
%

%


\else
In the anonymized full online version of our paper~\cite{fullonline}, we include more detail on the simulator proof for $\Pi_\msf{State}$.
\fi

\iftr %

\subsection{Proof of concept Implementation in Ethereum}
\begin{table}[t]
\centering
\begin{tabular}{l@{\qquad}r@{\qquad}r}
Transaction & Cost in Gas & Cost in \$ \\ \hline 
Create Preimage Manager & $150,920$ & $0.27$ \\ 
Submit Preimage & $43,316$ & $0.08$ \\  
Create Sprites & $4,032,248$ & $7.26$ \\ 
Deposit & $42,705$ & $0.08$ \\
Store State (Hash) & $98,449$ & $0.18$ \\ 
Payment & $163,389$ & $0.29$ \\ 
Open Transfer & $323,352$ & $0.58$ \\
Complete Transfer & $62,235$ & $0.11$ \\  
Dispute Transfer & $63,641$ & $0.11$ \\ 
Dispute & $88,047$ & $0.16$ \\ 
Individual Input & $49,444$ & $0.09$ \\ 
Resolve & $199,151$ & $0.36$ \\ 
\end{tabular}
\caption{\label{fig:financialcost} A breakdown of the costs on Ethereum's test network. We approximate the cost in USD (\$) using the conversion rate of 1 ether = \$90 and the gas price of 0.00000002 ether which are the real world costs in May 2017. }
\end{table}

In this section, we discuss our proof of concept implementation of Sprites before highlighting how the implementation supports concurrent conditional tranfers and how to incorporate a payment protocol to fairly exchange an invoice for payment among merchants and customers.  

\paragraph{Implementation of the Smart Contracts} 

Table \ref{fig:financialcost} provides a break down of the computational and financial costs for deploying our implementation of Sprites on Ethereum's Ropsten test network.
There is a contract for the Preimage Manager\footnote{Manager \texttt{0x62E2D8cfE64a28584390B58C4aaF71b29D31F087}} and the Sprites channel\footnote{Sprites: \texttt{0x85DF43619C04d2eFFD7e14AF643aef119E7c8414}}.
Both contracts are written in Solidity and reflect the functionalities outlined in this paper.
Next, we briefly discuss the functionality of each contract and their associated costs.

\textbf{Preimage Manager.} This is a timestamping service that records when the pre-image of a conditional transfer is revealed in the blockchain. 
Creating this contract requires 150,920 gas and submitting the pre-image of an in-flight conditional transfer requires 43,316 gas. 
The address of this contract is included in the Sprites contract to facilitate communication for resolving disputed conditional transfers. 

\textbf{Sprites}. This contract establishes a two-party channel and replicates the functionalities outlined in the paper. 
Creating the Sprites contract costs 4,032,248 gas.
It is worth mentioning that Sprites can be implemented such that its code is re-usable by other contracts.
This is feasible using the \texttt{DELEGATECALL} opcode which permits a new contract to call the desired function in Sprites before updating the calling contract's local storage.
An example of this approach can be found here \cite{upgradeable}. This is important to highlight as the expensive gas cost for creating a Sprites contract is only incurred once.

Each party can deposit coins into the contract at any time and this costs 42,705 gas. 
In each round both parties co-operatively sign the new state's hash and this requires no interaction with the contract.
If there is a dispute, either party can broadcast a transaction that includes the current state hash to the Sprites contract. 
This executes the update function which will verify the counterparty's signature before recording the state hash and this costs 98,449 gas.
It is worth noting that the contract only accepts the state hash if it is more recent than what is currently recorded.  
Once the hash is stored in the contract, both parties must provide Sprites with the pre-image in order to update the contract to reflect this state. 

Unfortunately solidity restricts the number of variables that can exist in a single function \cite{FC:MCCORRY17} and this required us to create two seperate functions \texttt{UpdateChannel} and \texttt{UpdateTransfers} for updating the contract's state. 
The first function \texttt{UpdateChannel} accepts both the payment and withdrawal commands.
The former updates the balance of both parties to reflect a new payment, whereas the former deallocates coins for use in this channel such that the coins can later be withdrawn from the contract.
In our experiment we tested the payment command and this costs 163,389 gas. 
On the other hand, the second function \texttt{UpdateTransfers} applies the conditional transfers. 
Opening a transfer costs 323,352 gas and completing a successful transfer costs 62,235 gas, whereas raising a dispute and communicating with the PreimageManager costs 63,641 gas.  

Once the latest state is applied, one or both parties can send commands directly to the contract using the trigger functionality.  
Calling the \texttt{dispute} function will employ a grace period that permits one or both parties to submit commands to the contract and this costs 87,850 gas. 
As an example, we provided the payment command as an individual input during this grace period which cost 49,444 gas.  
To finish either party can call the \texttt{resolve} function after the grace period.
This effectively processes the commands (i.e. payment) before updating the contract's state for the next round. 
Of course, this trigger can be cancelled during the grace period if either party submits a state hash for a more recent round that is what currently stored in the contract.


Our implementation of the state channel contract is given in Figure~\ref{fig:prot:statechannel:solidity}.
\definecolor{mymauve}{rgb}{0.58,0,0.82}
\definecolor{mygreen}{rgb}{0,0.55,0}

\lstset{
  numbers=left,
  stepnumber=1,
  firstnumber=1,
  numberfirstline=true,
  language=java,
  morekeywords={modifier,function,contract,address,mapping,uint,payable,assert,verifySignature,sha3,event},
  keywordstyle=\bfseries\color{blue},
  stringstyle=\color{mymauve},
  commentstyle=\color{mygreen},
  escapeinside={(*@}{@*)},
  basicstyle=\ttfamily,
  }

\begin{figure*}
  \begin{minipage}{\textwidth}
    \scriptsize
    \begin{framed}
\begin{lstlisting}
contract StateChannel {
    address[] public players;
    mapping (address => uint) playermap;
    int bestRound = -1;
    enum Flag { OK, DISPUTE }
    Flag flag;
    uint deadline;
    mapping ( uint => bytes32[] ) inputs;
    mapping ( uint => bool ) applied;

    bytes32 aux_in;
    bytes32 state;

    event EventDispute (uint round, uint deadline);
    event EventOnchain (uint round);
    event EventOffchain (uint round);

    function handleOutputs(bytes32 state) {
	// APPLICATION SPECIFIC REACTION
    }

    function applyUpdate(bytes32 state, bytes32 aux_in, bytes32[] inputs) returns(bytes32) {
	// APPLICATION SPECIFIC UPDATE
    }

    function input(uint r, bytes32 input) onlyplayers {
	uint i = playermap[msg.sender];
	assert(inputs[r][i] == 0);
	inputs[r][i] = input;
    }
    
    function dispute(uint r) onlyplayers {
    	assert( r == uint(bestRound + 1) ); // Requires the previous state to be registered
	assert( flag == Flag.OK );
	flag = Flag.DISPUTE;
	deadline = block.number + 10; // Set the deadline for collecting inputs or updates
        EventDispute(r, block.number);
    }

    function resolve(uint r) {
	// No one has provided an "update" message in time
	assert( r == uint(bestRound + 1) );
	assert( flag == Flag.DISPUTE );
	assert( block.number > deadline );

	// Process all inputs received during trigger (a default input is used if it is not received)
	flag = Flag.OK;
	state = applyUpdate(state, aux_in, inputs[r]);
	EventOnchain(r);
	bestRound = int(r);
    }

    function evidence(Signature[] sigs, int r, bytes32 _state) onlyplayers {
        if (r <= bestRound) return;
	
        // Check the signature of all parties
        var _h = sha3(r, state);
        for (uint i = 0; i < players.length; i++) {
	    verifySignature(players[i], _h, sig);
        }

	// Only update to states with larger round number
	if ( r > bestRound) {
	    // Updates for a later round supercede any pending dispute
	    if (status == Status.DISPUTE) {
		status = Status.OK;
		EventOffchain(uint(bestRound+1));
	    }
	    bestRound = r;
            state = _state;
	}
	applied[r] = true;
	handleOutputs(_state);
    }
}        
\end{lstlisting}
    \end{framed}
  \end{minipage}
  \caption{Solidity contract for general purpose state channel, corresponding to the pseudocode in Figure~\ref{fig:prot:statechannel:contract}.}
  \label{fig:prot:statechannel:solidity}
\end{figure*}



\else 
\fi 

\subsection{Reproduction of the Flare experiment~\cite{flare}}
\begin{figure}[t]
  \centering
  \includegraphics[width=3.3in]{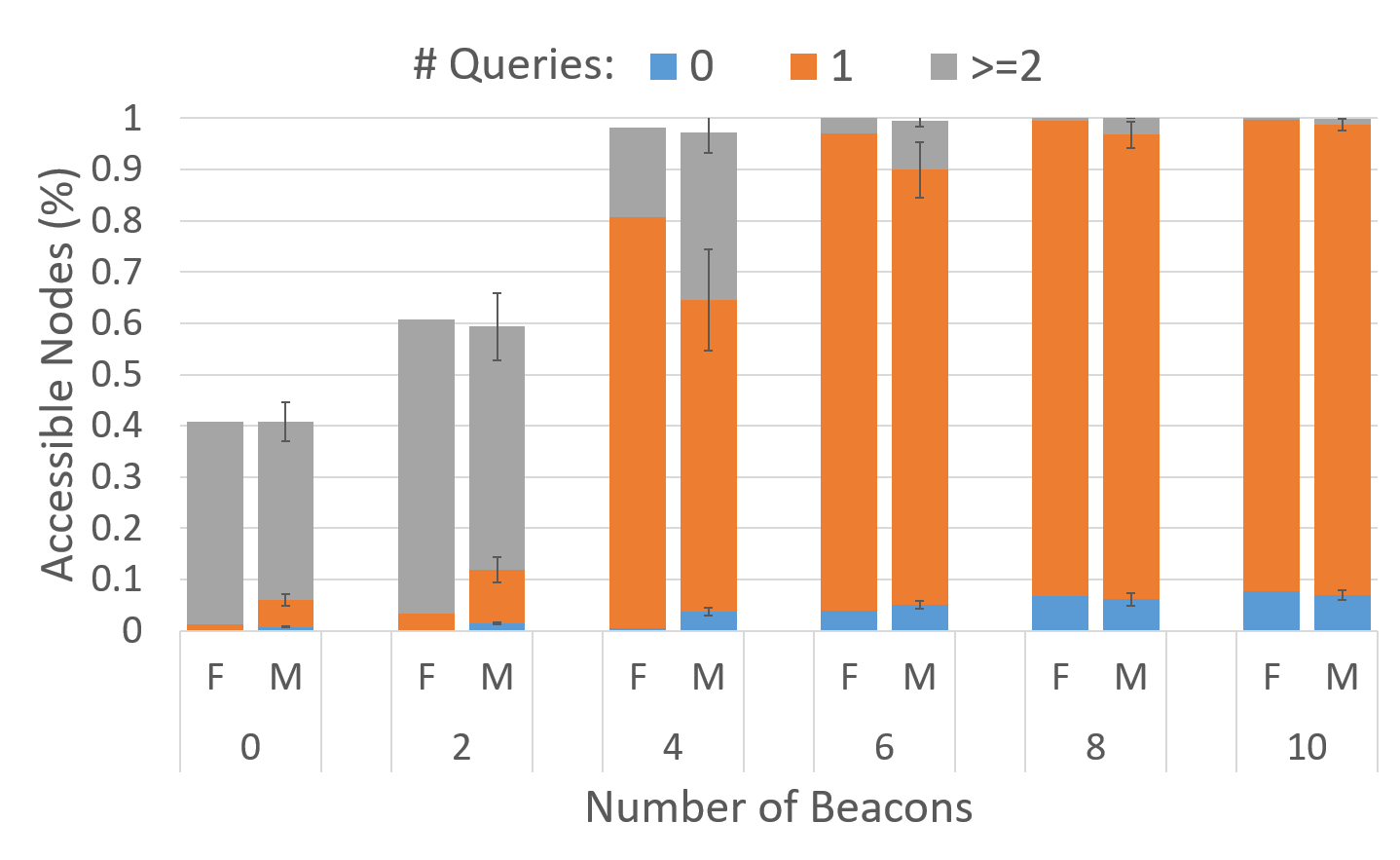}
  \caption{Payment channel routing success in Flare. The left columns (F) are from Pridhoko et al.~\cite{flare}, while the right columns (M) are from our implementation. If the accessible nodes percentage is $y$, then on average, a node in the network can route a payment to $y$ percent of the other nodes. We vary the Flare algorithm parameters, the number of peers each node queries, and the number of beacons kept in each node's routing table. }
  \label{fig:flare}
\end{figure}

Flare is one of the first proposed decentralized routing algorithms for payment channel
networks. In Flare, each node has a routing table consisting of the neighborhood of nodes that
are nearby in hop distance (connected by a path of few channels). Routing tables also contain
paths to a number of beacons, which serve as landmarks and allow the node to have a partial
view of the far away parts of the network. If a source node is unable to route to a destination
using its routing table, it continuous to query additional nodes for their routing tables until it is
either to determine a route or it gives up. Flare allows for a high expectation of finding a route
while requiring that each node to maintain memory proportionate to the logarithm of the total
number of nodes on the network.

\paragraph{Routing Table Generation}
In Flare, each node has a routing table consisting of the neighborhood of nodes that are nearby
in hop distance. Nodes also select a number of beacons, which serve as landmarks and allow
the node to have a partial view of the far away parts of the network. Like Kademlia, all nodes
are given the output of a hash function as an ID, so that the XOR of IDs can be used to measure
distance. Nodes attempt to find beacons that are closest to them in XOR address space.
Because these addresses are randomly generated, beacon nodes are expected to be average
distance away in hop length. During beacon selection, nodes close in XOR address space are
selected as beacon candidates. Beacon candidates can accept their role or give the path to an
even closer candidate that it knows of. This process continues until no additional beacon
candidates are suggested.

\paragraph{Route Selection}
During route selection, the source node first attempts to route to the destination using its
routing table. Failing that, it attempts to route using the combined routing table of both itself
and the destination node. Each failure after that, the source node not-yet queries the node in
its routing table that has the smallest XOR distance to the destination. The queried node
responds with its routing table, which the source node merges with its combined routing table.
The source node may continue to query nodes until it is able to route to the destination or give
up.

\paragraph{Beacon Selection}
Flare normally performs $k$-shortest paths using the combined knowledge of
the source, destination, and the routing tables of any queried nodes. Flare uses k-shortestpaths
to account for the dynamic nature of payment channel networks, which means channels
may not have enough capacity by the time route selection is made. Upon each failure, the
source queries an additional node for its routing table, which is then used to reattempt route
selection.
We approximate this behavior by performing a single shortest path over only the channels
known to have enough capacity to field the payment. Upon failing 10 times (and querying 10
additional nodes' routing tables), we mark the request as failed. Generally, payments may have to route
along longer paths or fail as channels close and capacity diminishes. Therefore, we expect that
our Flare approximation will exhibit signs of stress in the network by more frequently returning
longer path lengths or failing to return a path at all. 

Pridhoko et al.~\cite{flare} evaluated Flare with simulations of 2,000 and
100,000 nodes. In their simulations, they use the Watts-Strogatz topology with an average
degree of 4 and edge rewiring probability of 0.3. They parameterize their algorithm with a
neighborhood radius of 2 and a route selection query limit of 10. They vary the number of
beacons between 0 and 12.
Their simulation proceeds follows:
\begin{enumerate}
\item Initialize payment channel network topology.
\item Perform beacon selection for all nodes in random order.
\item Choose 10 nodes randomly. For each selected node attempt to route to all other
  nodes.
\end{enumerate}
We repeated their simulation 30 times using our implementation of their algorithm.
Comparisons of our results for accessible nodes are shown in Figure~\ref{fig:flare}. Our results also
feature error bars that are equal to 2 standard deviations both above and below the mean
value. Though there are some differences in our results, we notice that the differences
decrease as the number of queries and beacons increase. While some of the variation can be
explained by nondeterminism, we do expect that there are minor differences in
implementation. Regardless, because we use at least 6 beacons and a query limit of 10 in our
experimentation (where there is little difference in our implementations), we expect our
implementation to faithfully represent the proposed algorithm.

\iftr
\else 
\fi 

\end{document}
